\newcommand{\ind}{\perp\!\!\!\!\perp} 
\newcommand{\notind}{\not\!\perp\!\!\!\!\perp}
\newtheorem{theorem}{Theorem}[section]
\newtheorem{lemma}{Lemma}[section]
\newtheorem{proposition}{Proposition}[section]
\newtheorem{assumption}{Assumption}[section]
\newtheorem{example}{Example}[section]
\theoremstyle{definition}
\newtheorem{definition}{Definition}[section]
\theoremstyle{remark}
\newtheorem*{remark}{Remark}
\begin{document}

\twocolumn[

\aistatstitle{A New Causal Decomposition Paradigm towards Health Equity}

\aistatsauthor{ Xinwei Sun* \And Xiangyu Zheng }

\aistatsaddress{ School of Data Science \\ Fudan University \And Department of Statistics, Guanghua School of Management \\ Peking University}

\aistatsauthor{Jim Weinstein*}

\aistatsaddress{Microsoft Corporation, Microsoft Research, Redmond, WA \\ Dartmouth, Tuck Business School \\ Northwestern University, Kellogg School of Business }
]

\begin{abstract}
Causal decomposition has provided a powerful tool to analyze health disparity problems by assessing the proportion of disparity caused by each mediator (the variable that mediates the effect of the exposure on the health outcome). However, most of these methods lack \emph{policy implications}, as they fail to account for all sources of disparities caused by the mediator. Besides, its identifiability needs to specify a set to be admissible to make the strong ignorability condition hold, which can be problematic as some variables in this set may induce new spurious features. To resolve these issues, under the framework of the structural causal model, we propose a new decomposition, dubbed as \emph{adjusted} and \emph{unadjusted} effects, which is able to include all types of disparity by adjusting each mediator's distribution from the disadvantaged group to the advantaged ones. Besides, by learning the \emph{maximal ancestral graph} and implementing \emph{causal discovery from heterogeneous data}, we can identify the admissible set, followed by an efficient algorithm for estimation. The theoretical correctness and efficacy of our method are demonstrated using a synthetic dataset and a common spine disease dataset.
\end{abstract}

\section{Introduction}
Health disparity/inequity, which is defined as the health outcome difference between socially advantaged and disadvantaged populations \citep{braveman2011health}, is a serious public health issue. To address this issue, a natural policy would be to identify all sources of disparities and develop policies to intervene and/or adjust for factors that mediate the effect of exposure (e.g., race or ethnicity) on the health outcome. To assess the effect of this policy, a natural question is: \emph{what is the amount of disparity reduction, after making a policy that removes all disparities through the mediator of interest?}

To answer this and related questions, many causal decomposition methods have been proposed, including natural indirect effects \cite{pearl2001direct, robins1992identifiability}, controlled mediated effects \cite{vanderweele2011controlled}, path-specific effects \cite{avin2005identifiability}, Oaxaca-Blinder decomposition \cite{oaxaca1973male}, among others, and have been applied to many health inequity scenarios \cite{ibfelt2013socioeconomic, hystad2013neighbourhood, jackson2018interpretation, blakely2018socioeconomic, mcguire2006implementing}. Although the common target is quantitatively attributing the disparity to the mediator via causal inference, they adopt different intervention strategies. Particularly, the natural indirect effect intervenes on the mediator to the value it would have taken had the exposure variable changed to the advantaged group and everything else had been the same; while the controlled mediated effect intervenes on other mediators at a fixed value. 

Different decomposition methods have different implications; however, when it comes to our goal, they either lack \emph{policy implications} or have issues with \emph{identifiability}. For instance, the natural indirect effect is more attributional than operational since it measures the effect under the original situation (\emph{i.e.}, the value of exogenous is unchanged) instead of the effect of a policy under a new situation. Besides, its identifiability requires that the exposure not affect the mediator-outcome confounders. In many scenarios, however, this assumption may not hold (if $R \to X$ in Fig.~\ref{fig:decompose}). Furthermore, when this assumption is violated, the natural indirect effect may fail to account for all sources of disparities (\emph{e.g.}, $M' \ind R$ when $X$ is an unobserved selection variable in Fig.~\ref{fig:decompose} (b)). Although \cite{jackson2020meaningful, vanderweele2014effect} can alleviate this problem, their identifiability is based on the strong ignorability condition, which means admissibility for some provided covariate sets, \emph{i.e.}, to deconfound between the mediator and the outcome. However, this condition may not hold when the provided covariate set includes some variables that perturb directed paths or induce new spurious features \cite{pearl2009causality}.

\textbf{Our Contributions.} To resolve these issues, we propose a new decomposition paradigm under the framework of the structural causal model (SCM), namely \emph{unadjusted and adjusted effects}, towards better policy implications and identifiability results, that account for social disparities. 

To be specific, we adjust each mediator's distribution from the disadvantaged group to the advantaged one, by generating mediators with an independent copy of exogenous variables. This has better policy implications since \textbf{i)} it is more aligned with the goal of assessing the effectiveness of a policy, and \textbf{ii)} can include all sources of disparities. Moreover, guided by the back-door criterion, we are able to identify the admissible set for appropriate attribution, by leveraging Markovian equivalence conditions of maximal ancestral graphs (MAGs) and exploring the heterogeneity among different social groups. With this guarantee, we propose an efficient algorithm to estimate the adjusted and unadjusted effects of social inequities. Theoretical correctness and utility are demonstrated by the consistent estimation results on a synthetic and robust spine-disease dataset.

\textbf{Organization.} In Sec.~\ref{sec.related}, we review existing causal decomposition methods. In Sec.~\ref{sec.pre}, we introduce some background knowledge regarding the structural causal model and causal discovery. In Sec.~\ref{sec.adjusted}, we propose a new causal decomposition framework: \emph{unadjusted} and \emph{adjusted} effects. We will show that compared to the natural indirect effect, our proposed adjusted effect is able to include all sources of disparity, thus having better policy implications. In Sec.~\ref{sec.iden}, we provide an identifiability analysis. Different from the strong ignorability condition that pre-specifies a covariate set to be admissible, we propose to identify the admissible set, via \textbf{i)} local causal discovery from heterogeneity data according to the exposure variable; and \textbf{ii)} exploring Markovian equivalence, which is more computationally efficient but requires additional conditions to determine equivalence. Guided by this analysis, we in Sec.~\ref{sec.estimate} introduce our estimation method. In Sec.~\ref{sec.experiment}, we evaluate our method on a synthetic dataset and a spine-disease dataset. Sec.~\ref{sec.conclusion} concludes the paper.

\section{Related Work}
\label{sec.related}

Existing methods for health inequity decompose the total amount of disparity into the proportion directly from exposure to the reported outcome, and the proportion indirectly through the mediated factor. Typical decomposition includes natural indirect effects, path-specific effects, controlled mediated effects, Oaxaca-Blinder decomposition, \emph{etc.} Perhaps the most familiar method is natural indirect (mediation)/direct effects \citep{pearl2001direct, vansteelandt2012natural, valeri2013mediation}, which have been applied to many health inequity problems \citep{ibfelt2013socioeconomic, hystad2013neighbourhood}. However, this effect is more attributional \citep{pearl2001direct} than operational since it assesses the change under the original situation, which may change at the time when the new policy is made. Besides, its identifiability requires no mediator-outcome confounder to be affected by the exposure, which may not be satisfied in health equity scenarios \citep{vanderweele2014effect}. Besides, when such confounders exist, the natural indirect effect may fail to account for all sources of disparities caused by the mediator of interest. Similarly, the controlled mediated effect, which assigns a fixed value to other mediated factors, is of less policy interest as it does not allow these factors to vary across individuals by default. The path-specific effect \citep{jackson2018interpretation, avin2005identifiability, gong2021path} measures the effect along some specific paths. As the intervention is implemented on the edge rather than the exposure or mediator variables, it has to integrate over the distribution of the exposure to obtain the path-specific effect; hence, it fails to measure the change after intervention (policy making). The Oaxaca-Blinder decomposition \cite{oaxaca1973male, quirk2006administrative} linearly relates the outcome to other covariates and is later endowed with causal meaning \citep{jackson2018decomposition} when the regression function refers to structural equations and the provided covariate set happens to deconfound the mediator and the outcome. Furthermore, the non-parametric form of this decomposition is proposed in \cite{jackson2018decomposition}. Other types of decomposition include \citep{jackson2020meaningful, vanderweele2014effect}. Similar to Oaxaca-Blinder decomposition, these methods also presume the admissibility of a provided covariate set (\emph{i.e.}, strong ignorability), which can be problematic since the provided covariate set may induce new spurious features, violating the strong ignorability condition. 

\textbf{In contrast}, our decomposition method, namely adjusted and unadjusted effects, can resolve the above defects in terms of \emph{policy implications} and \emph{identifiability}. Specifically, we define the adjusted effect as the difference between the advantaged and disadvantaged groups after equalizing the mediator's distribution, in the framework of structural causal models (SCMs) that serve an intuitive exhibition of the difference from other decomposition methods clearly. We show that our adjusted effect can well assess the disparity reduction after making an policy, by eliminating all sources of disparity. Besides, for identifiability, equipped with the back-door criterion in SCM, we establish the theory and the procedure to identify the admissible sets among covariates rather than presume by default that some pre-specified covariate sets are admissible. These results can extend to other estimators, such as \citep{jackson2020meaningful} which lets the mediator additionally condition on other variables.

\section{Preliminary}
\label{sec.pre}

\noindent \textbf{Structural Causal Model (SCM).} The SCM is defined as $\mathcal{M}:=\langle G,\mathcal{F}, \mathbf{U}, P(\mathbf{U}) \rangle$. Here, the $G:= (\mathbf{V},\mathbf{E})$ is a directed acyclic graph (DAG) that describes the causal structure over endogenous variables $\mathbf{V}$, where $\mathbf{E}$ denotes the edge set such that $X \to Y \in \mathbf{E}$ means a direct causal relation from $X$ to $Y$. We say $X$ causally influences $Y$ if there is a directed path from $X$ to $Y$. We refer to $\text{An}(V), \text{Pa}(V), \mathrm{De}(V), \text{Ch}(V)$ as ancestral, parent, descendant, and child variable sets of $V$. The $\mathbf{U}:=\{U_i\}$ denote exogenous variables, with $P(\mathbf{U})$ denoting its probability function. The $\mathcal{F} := \{f_{V_i}(Pa(i),U_i)\}_{V_i \in \mathbf{V}}$ denotes the set of structural equations. These structural equations are autonomous to each other, \emph{i.e.}, breaking one equation will not affect others. Under the \emph{Markovian} assumption that $\mathbf{U} := \{U_i\}$ are independent, we can factorize joint distribution into \emph{disentangled} factors \citep{scholkopf2021toward}, \emph{i.e.}, $P(\mathbf{V}) := \Pi_{V_i \in \mathbf{V}} P(V_i|Pa(i))$. These permit us to define the interventional distribution, \emph{i.e.}, "$P(\mathbf{O}|do(O_i=o_i))$", and study the (conditional) average causal effect (ACE): $\mathbb{E}(Y|do(T=1),X=x) - \mathbb{E}(Y|do(T=0),X=x)$. Graphically speaking, $do(O=o)$ means removing arrows into $O$ and setting its value to $o$. $G_{\overline{V}}$ and $G_{\underline{V}}$ respectively denote the graph with arrows going into and emanating from $V$ deleted. With the SCM, we further define the sub-model $\mathcal{M}_x$ as $\mathcal{M}$ with $X \gets f_X(Pa(X),U_x)$ replaced by $X=x$. The counterfactual $Y(X=x)$ is then defined as $Y(X=x)(\mathbf{U}=u):= Y_{\mathcal{M}_x}(\mathbf{U}=u)$.

The SCM translates the strong ignorability assumption, \emph{i.e.}, $\{Y(1),Y(0)\} \ind T|Z,X$ into a more intuitive graphical criteria \citep{galles2013testing} for $Z$ to be admissible: $P(Y|do(T=t),X) = \int P(Y|T=t,Z,X)P(Z|X)dZ$. Among these criteria, the back-door criterion is the most typical one, which refers to a subset that \textbf{i)} blocks each back-door path from $T$ to $Y$ and \textbf{ii)} meanwhile does not perturb directed paths or induce new spurious features. This criterion, especially \textbf{ii)}, informs us that the presumption of admissibility for some covariate sets may be problematic when it includes variables that violate \textbf{ii)}.

\noindent \textbf{Causal Discovery.} Equipped with the SCM that converts the strong ignorability condition into graphical criteria, the admissible sets can be easily detected via causal discovery. Typical methods for causal discovery include the PC algorithm, which first learns an undirected skeleton via conditional independence test and identifies as many directions as possible via v-structure, \emph{e.g.}, $V_i \to V_k \gets V_j$.

\noindent \textbf{Maximal Ancestral Graph.} Indeed, when there are unobserved variables, one can learn a \emph{maximal ancestral graph} (MAG) over a subset node \citep{zhang2005transformational}, which is a mixed graph that preserves the conditional independence relation in the sense that two variables are not adjacent if and only if there is no inducing paths\footnote{We leave the definition of inducing path to the supplementaries} between them. Specifically, the MAG may contain: 1) $X \to Y$, 2) $X \gets Y$ and 3) $X \leftrightarrow Y$, which respectively mean 1) $X \in \mathrm{An}(Y)$, 2) $Y \in \mathrm{An}(X)$ and 3) $X$ and $Y$ is correlated but $X \not\in \mathrm{An}(Y)$, $Y \not\in \mathrm{An}(X)$. We define $\circ$ that can refer to the head or tail of an arrow. \cite{zhang2005transformational, spirtes1996polynomial} provided conditions for two MAGs to be \emph{Markovian equivalent} (in Prop.~\ref{prop.mag-equi}), which can help to determine causal directions that differentiate among classes of MAGs. 
\begin{proposition}[Proposition 1 in \citep{spirtes1996polynomial}]
\label{prop.mag-equi}
Two MAGs are Markovian equivalent iff \textbf{i)} they have the same adjacencies; \textbf{ii)} they have the same unshielded colliders; \textbf{iii)} for any discriminating path of a vertex $V$, $V$ is a collider on the path in one graph iff it is a collider on the path in the other. Here, $B$ is an unshielded collider if in $(A,B,C)$, $A \circ\!\!\!\to B \gets\!\!\!\circ  C$. A discriminating path $u:=\langle X,W_1,...,W_K,V,Y \rangle$ means $u$ includes three edges, with $X$ not adjacent to $Y, V$ adjacent to $Y$, and any $W_k$ (for $k\in \{1,...,K\}$) is a collider and a parent of $Y$. 
\end{proposition}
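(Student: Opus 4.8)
The plan is to establish the two implications of the ``iff'' separately, relying throughout on two standard facts about a MAG $\mathcal{G}$ \citep{zhang2005transformational}: adjacent vertices can never be $m$-separated, whereas a non-adjacent pair $A,C$ is always $m$-separated by some set of their ancestors.

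For the forward direction (Markov equivalence $\Rightarrow$ (i)--(iii)) I would first note that condition (i) is immediate: if $A,C$ were adjacent in one graph but separable in the other, that separating set would contradict Markov equivalence. For (ii), fix an unshielded triple $\langle A,B,C\rangle$; the key observation is that ``$B$ is a collider there'' is equivalent to ``$B$ lies outside some set that $m$-separates $A$ and $C$'' --- when $B$ is a collider the arrowheads at $B$ keep it out of a canonical ancestor-based separator, and when $B$ is a noncollider the two-edge path through $B$ would remain active unless $B$ is conditioned on. Since the right-hand side of this equivalence is phrased purely in terms of $m$-separation, it is preserved, so (together with (i)) the unshielded colliders agree. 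For (iii) I would first check that a discriminating path for the distinguished vertex $V$ in one graph is again one in the other --- the endpoint non-adjacency comes from (i), and the requirement that each intermediate $W_k$ be a collider sits on an unshielded triple and hence transfers by (ii) --- and then run the analogue of the previous argument: along a discriminating path, $V$ belongs to every $m$-separator of its endpoints exactly when $V$ is a noncollider. Hence $V$'s collider status is fixed inside the equivalence class.

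For the harder direction ((i)--(iii) $\Rightarrow$ Markov equivalence) I would argue by contradiction. Suppose $\mathcal{G}_1,\mathcal{G}_2$ satisfy (i)--(iii) but some $\mathbf{Z}$ $m$-separates $X$ from $Y$ in $\mathcal{G}_1$ while leaving an $m$-connecting path $p = \langle X, V_1,\dots, V_{m-1}, Y\rangle$ in $\mathcal{G}_2$; take such a $p$ of minimal length over all witnessing triples $(X,Y,\mathbf{Z})$. Minimality pins down the structure of $p$: every non-endpoint noncollider of $p$ lies outside $\mathbf{Z}$, every collider of $p$ is an ancestor of $\mathbf{Z}$, and --- crucially --- whenever a triple $\langle V_{k-1}, V_k, V_{k+1}\rangle$ on $p$ is \emph{shielded}, the shielding edge together with a minimal sub-path of $p$ realizes $V_k$ as the distinguished vertex of a discriminating path between two vertices on $p$. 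I would then use (i) to transport the skeleton of $p$ into $\mathcal{G}_1$, use (ii) to copy the collider/noncollider pattern of the \emph{unshielded} triples of $p$, and use (iii) --- via the embedding just noted --- to copy it for the \emph{shielded} triples; finally I would reassemble these local pieces into an $m$-connecting path between $X$ and $Y$ in $\mathcal{G}_1$, contradicting $X \perp_m Y \mid \mathbf{Z}$ there.

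The hard part will be the second direction, in two places. First, showing that \emph{every} shielded triple that can occur on the minimal connecting path is indeed captured by a discriminating path, so that assumption (iii), and not merely (ii), is exactly what is needed --- this is the conceptual content of why discriminating paths appear in Prop.~\ref{prop.mag-equi}. Second, promoting the copied collider/noncollider \emph{labels} to actual arrowhead \emph{orientations} in $\mathcal{G}_1$ and checking that the reassembled walk is a legitimate path: one must propagate the ancestrality constraints along $p$ so that no directed cycle or ``almost directed cycle'' is created, which is precisely where the maximality and ancestrality of a MAG (as opposed to an arbitrary mixed graph) are used. Both points are bookkeeping-intensive rather than deep, but they are where essentially all the real work lies.
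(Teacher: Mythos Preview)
The paper does not prove Proposition~\ref{prop.mag-equi} at all: it is quoted verbatim as ``Proposition~1 in \citep{spirtes1996polynomial}'' and thereafter used only as a black box inside the proofs of Lemma~\ref{lemma.detect-large} and Theorem~\ref{thm.iden}. There is therefore no ``paper's own proof'' against which to compare your proposal.

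Taken on its own merits, your outline tracks the classical Spirtes--Richardson argument (and its later refinements by Ali--Richardson--Spirtes and \citet{zhang2005transformational}) fairly closely, and you correctly identify the minimal-counterexample reduction as the substance of the converse. One genuine gap in the forward direction, however: in your argument for~(iii) you write that ``the requirement that each intermediate $W_k$ be a collider sits on an unshielded triple and hence transfers by~(ii)''. Nothing in the definition of a discriminating path forces the triples $\langle W_{k-1},W_k,W_{k+1}\rangle$ to be unshielded, and the parent-of-$Y$ orientation $W_k\to Y$ is an edge-mark claim that~(ii) does not by itself deliver. The original proofs handle this not by a one-shot appeal to~(ii) but by an induction along the path, using the already-fixed collider/parent status of $W_1,\dots,W_{k-1}$ together with the $m$-separation of $X$ and $Y$ to pin down the marks at $W_k$. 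Your converse sketch is sound in spirit; as you note, the bookkeeping that turns the copied collider labels into consistent arrowheads without violating ancestrality is where the ancestral-graph axioms actually do work.
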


\noindent \textbf{Identifying Directions from Heterogeneity.} Recently, \cite{huang2020causal} proposed to leverage domain index $C$ to identify directions (beyond v-structure) among variables with changed causal mechanisms, \emph{i.e.}, $P^c(V_i|Pa(i))$ changes across $C$. This can be achieved under a ``faithfulness" condition at distributional level, \emph{i.e.}, if $C \to V_i$, $C \to V_j$ and $V_i \ind V_j | C, X$ for some $X$, then $\{P(V_i|X,C=c\}$ is independent of $\{P(V_j|V_i, X,C=c)\}$ due to disentanglement of generating factors $\{P(V_i|Pa(i))\}$; while $\{P(V_j|X,C=c)\}$ and $\{P(V_i|V_j, X,C=c)\}$ are dependent. One can thus determine $V_i \to V_j$ (\emph{resp.} $V_j \to V_i$) if the Hilbert Schmidt Independence Criterion (HSIC) \citep{gretton2007kernel} $\Delta_{V_i \to V_j|X} < \alpha$ (\emph{resp.}, $\Delta_{V_j \to V_i|X} < \alpha$) for some significance level $\alpha > 0$.

\section{Adjusted and Unadjusted Effect}
\label{sec.adjusted}

\textbf{Problem Setup.} Our data consists of $\{r^{(i)}, o^{(i)},y^{(i)}\}_{i=1}^n$ $\sim_{i.i.d} P(R,\mathbf{O},Y)$, where $R$ denotes the exposure variable with $R=1$ (\emph{resp.} $R=0$) denoting the advantaged (\emph{resp.} disadvantaged) group, $Y$ denotes the health outcome, and $\mathbf{O}$ denotes the observed covariate set. Denote $Y(R=r)$ as the counterfactual. The health inequity means $\mathbb{E}(Y(R=1)) - \mathbb{E}(Y(R=0)) > 0$. Our goal is to assess the disparity reduction to measure the effect of making policy on some mediator variables $M \in \mathbf{M} \subset \mathbf{O}$. Here, we call $M$ a mediator if there exists a directed causal path from $R$ to $Y$ that goes through $M$.

\begin{figure*}[ht!]
 \centering
 \begin{subfigure}[t]{0.33\textwidth}
 \centering
        \includegraphics[width=1.3in]{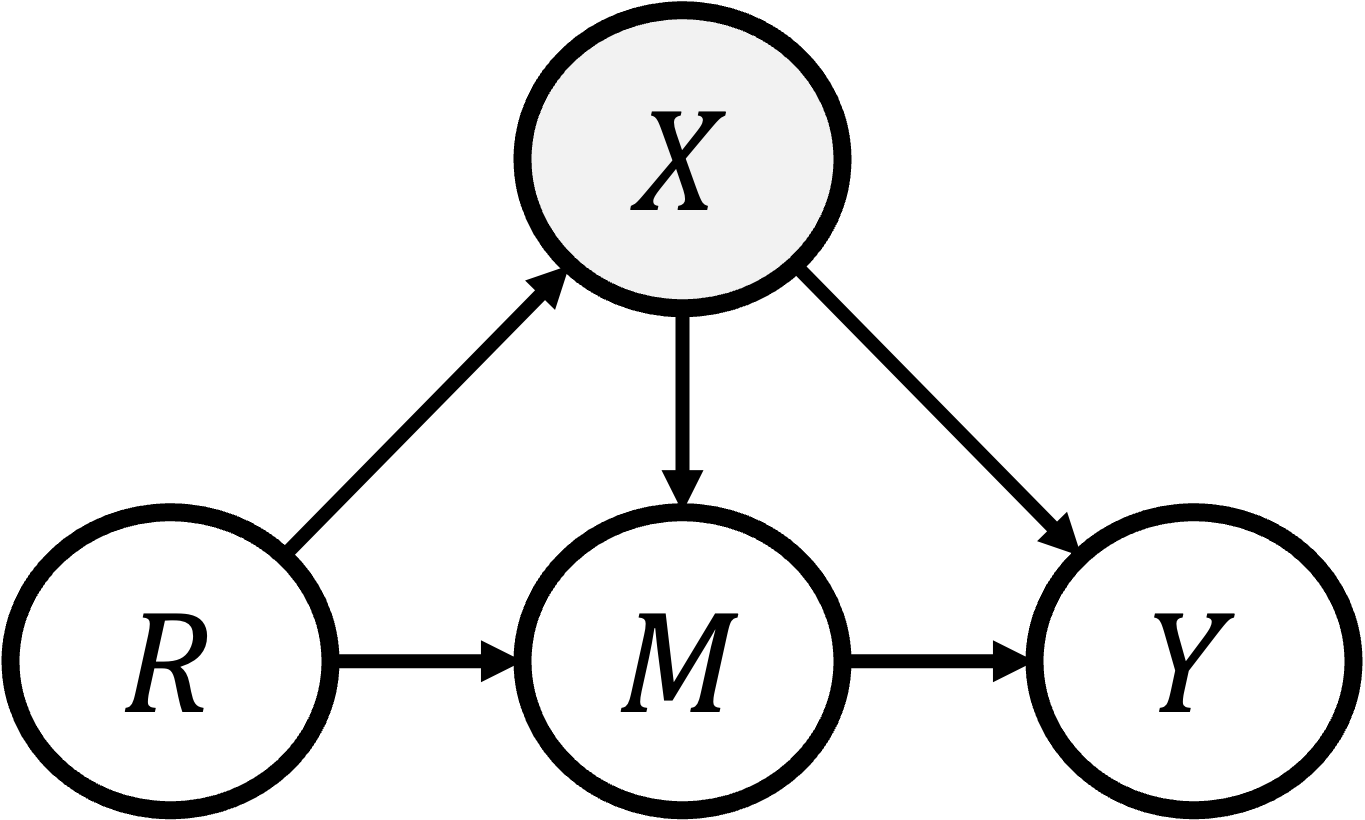}
        \caption{(a) Original causal graph}
        \label{fig:ground-truth-graph}
\end{subfigure}
 \begin{subfigure}[t]{0.33\textwidth}
 \centering
        \includegraphics[width=1.5in]{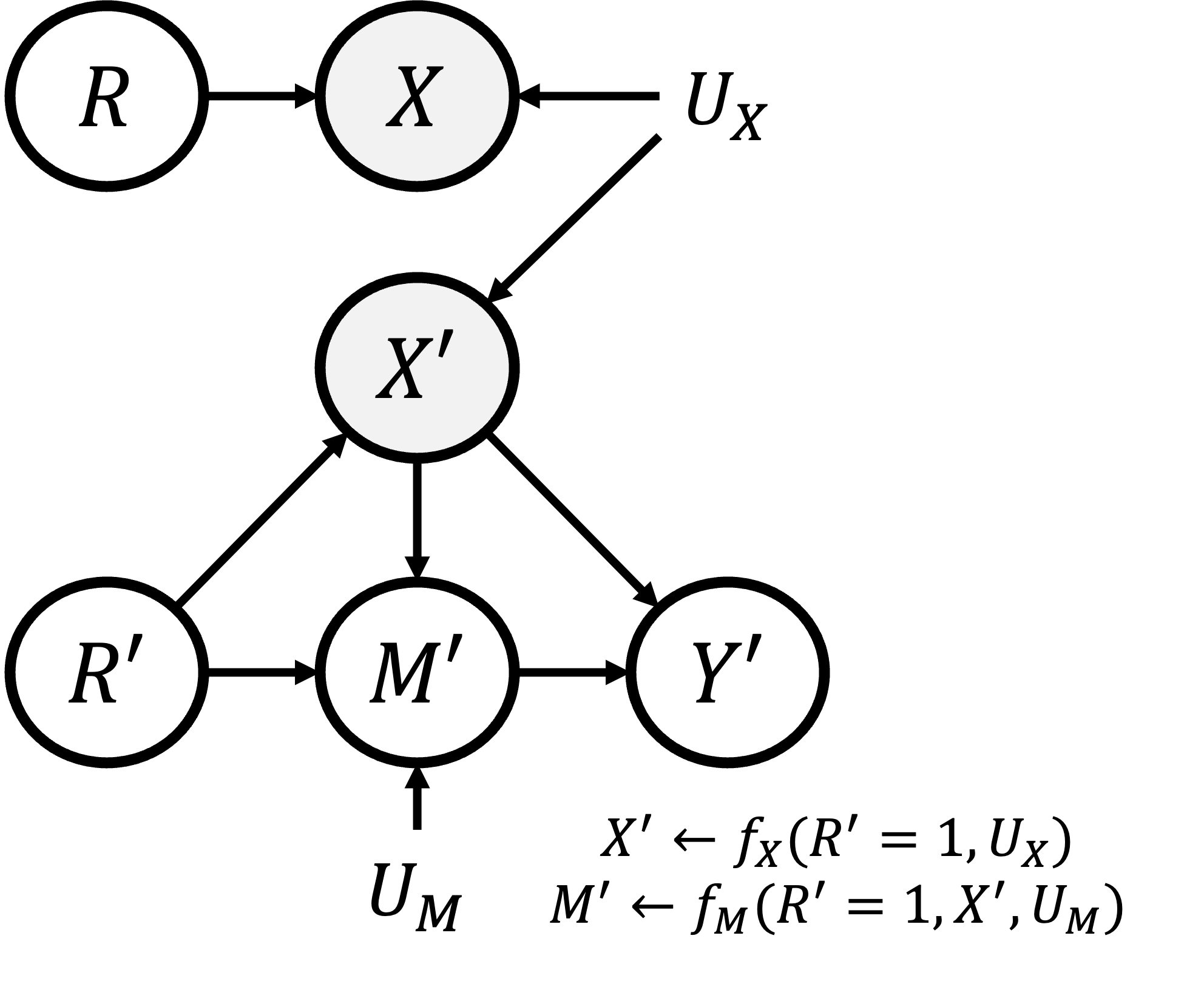}
        \caption{(b) Mediation $M(R'=1)(\mathbf{u})$}
        \label{fig:ground-mediation}
\end{subfigure}
 \begin{subfigure}[t]{0.33\textwidth}
 \centering
        \includegraphics[width=1.5in]{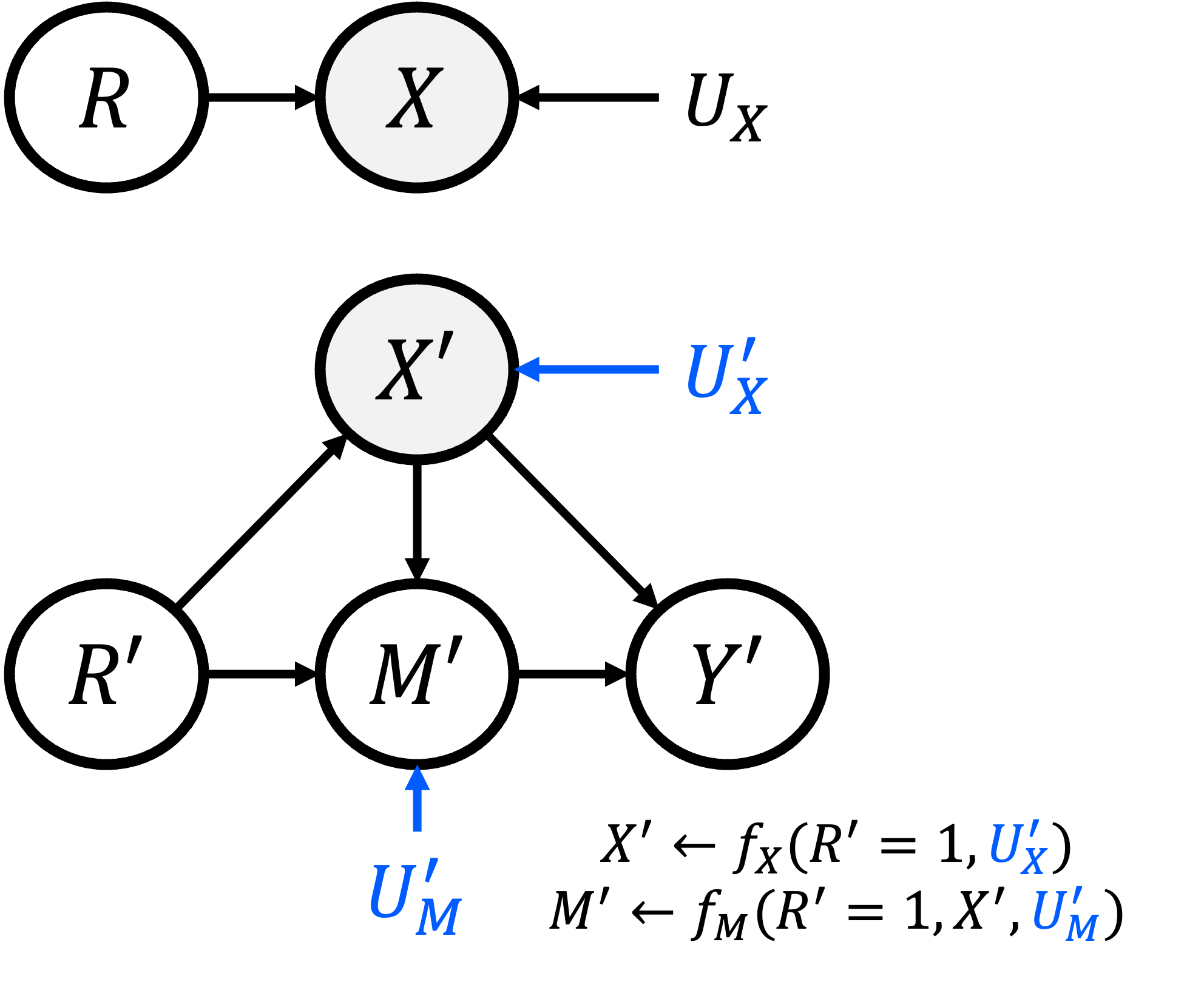}
        \caption{(c) Adjusted (Ours) $M_{R'=1}(\mathbf{u}')$}
        \label{fig:ground-adjusted}
\end{subfigure}
\caption{$R, M$ respectively denote exposure and mediator variables; $X$ is unobserved; $V'$ is a copy of $V$ after making policy, generated by another group $R' \neq R$. Unlike the \emph{Mediation Effect} with original exogenous $\mathbf{U}$; our \emph{Adjusted Effect} uses $\mathbf{U}' \sim_{i.i.d} \mathbf{U}$ to generate $M'$ (Eq.~\eqref{eq.adj-M}), which can effectively account for all sources of disparities.}
\label{fig:decompose}
\end{figure*}

In this section, we introduce a new decomposition method: the \emph{unadjusted} and \emph{adjusted} effects. Formally speaking, $\forall \thinspace M \in \mathbf{M}$, we define the adjusted effect $\delta_M$ and the unadjusted effect $\zeta_M$ as:
\begin{align}
    \delta_M & := \mathbb{E}(Y(R=0,G_{M|R=1})) - \mathbb{E}(Y(R=0)), \label{eq.adjusted} \\
    \zeta_M & := \mathbb{E}(Y(R=1)) - \mathbb{E}(Y(R=0,G_{M|R=1})), \label{eq.unadjusted}
\end{align}
where $G_{M|R=1}$ means to replace the original factor $P(M|Pa(M))$ with $P(M|R=1)$, \emph{i.e.}, assigning the mediator $M$ with the distribution of the advantaged group $M|R=1$. We use $V_i'$ to denote the copy of $V_i$ generated by the new policy from $R' \neq R$, in order to differentiate the original $V_i$. Then, correspondingly, structural equations for $Y(R=0,G_{M|R=1})$ are: 
\begin{align}
& R \gets 0, \label{eq.adj-R} \\
& V_i \gets f_i(Pa(V_i),u_i), \ \forall \thinspace V_i \in \mathbf{V} \backslash \mathrm{De}(M), \label{eq.adj-V} \\
& \mathbf{G_{M|R=1}}: \ \mathbf{M \gets M_{R=1}(\mathbf{u}')}, \ \mathbf{u}' \sim P(\mathbf{U}), \label{eq.adj-M} \\
& V_i \gets f_i(Pa(V_i)\backslash \mathbf{M}, \mathbf{M},u_i), \ \forall \thinspace V_i \in  \mathrm{Ch}(\mathbf{M}), \label{eq.adj-ch-m} \\
& V_i \gets f_i(Pa(V_i),u_i), \ \forall \thinspace V_i \in \mathrm{De}(\mathbf{M}) \backslash \mathrm{Ch}(\mathbf{M}), \label{eq.adj-de-m}
\end{align}
Here, Eq.~\ref{eq.adj-R},~\ref{eq.adj-V},~\eqref{eq.adj-ch-m},~\eqref{eq.adj-de-m} describe the original generating process under $\mathbf{u}$; while Eq.~\ref{eq.adj-M} describes the regenerating process of $M$ under the intervened model $M_{R=1}$ with $\mathbf{u'} \sim P(\mathbf{U})$ that is independent to $\mathbf{u}$. Finally, we regenerate $M$'s children and descendants using Eq.~\ref{eq.adj-ch-m},~\eqref{eq.adj-de-m}. Simply speaking, the adjusted effect replaces $M \gets f_M(R=0,Pa(M)\setminus R, u_M)$ in the original SCM with an intervened policy on $M$: $M_{R=1}(\mathbf{u}' \sim P(\mathbf{U}))$ that equalizes the distribution of $M$ between advantaged and disadvantaged groups.

Compared with the natural indirect effect with $Y(R=0,M(R=1))$, our adjusted effect generates $M$ using another exogenous $\mathbf{U}' \sim_{i.i.d} \mathbf{U}$, which has better policy implications in terms of measuring the effect of making policy in a new situation. Besides, it can effectively account for all sources of disparity. To illustrate, consider the example in Fig.~\ref{fig:decompose}:
\begin{example}
\label{exam:1}
Suppose $R, X \text{ (unobserved)}, M, Y$ in Fig.~\ref{fig:decompose} (a) respectively denote the race, retirement income, medical insurance, and health outcome. The disadvantaged group has a lower level of retirement income and thus a lower level of medical insurance. As illustrated in Fig.~\ref{fig:decompose} (b), the structural equations for natural indirect effect $Y(R=0,M(R=1))$ are:
\begin{align*} 
& \begin{cases}
  R \gets 0, \\
  X \gets f_X(R=0,U_X), \\
  \mathbf{X' \gets f_X(R'=1,U_X), M' \gets f_M(X',R'=1,U_M)},\\
  Y \gets f_Y(X,M',U_Y),
\end{cases} \\
\MoveEqLeft[-1]\text{SCM of $Y(R=0,M(R=1))$}
\end{align*}
in which $U_X$ is the confounder between $X$ and $X'$ to generate $M'$. When $X$ is unobserved, $M'$ may not eliminate all sources of disparities, \emph{i.e.}, $M' \notind R$, either when $X$ is the selection or latent variable. Specifically, when $X$ is taken as the selection variable, then as the collider between the path from $R =0$ to $M'$ (in Fig.~\ref{fig:decompose} (b)) it induces the correlation between $R = 0$ and $M'$, which means $M'$ still mixes the information of $R=0$. When $X$ is the latent variable, it can be mistakenly taken as exogenous of $M$, \emph{i.e.}, $X \in \mathbf{U}_M$ since the condition that whether there is another variable affecting $X$ is unknown. Since the exogenous is unchanged in natural indirect effect, $M' \gets f_M(R'=1,X,\mathbf{U}_M \backslash X)$ is generated with $X$ that is generated from $R=0$. In this case, the $M(R=1)(\mathbf{u})$ of the natural indirect effect fails to account for the disparity in the path of $R \to X \to M$. 
\begin{align*}
  & \begin{cases}
  R \gets 0, \\
  X \gets f_X(R=0,U_X),\\
  \mathbf{X' \gets f_X(R=1,\tilde{U}_X), M \gets f_M(X',R=1,U_M)},\\
  Y \gets f_Y(X,M,U_Y).
  \end{cases} \\
\MoveEqLeft[-1]\text{SCM of $Y(R=0,G_{M|R=1})$}
\end{align*}
In contrast, our adjusted effect with $Y(R=0,G_{M|R'=1})$ is able to remove all disparities sourced from medical insurance $M$, \emph{i.e.}, $M_{R'=1} \ind R$, since it generates $M'$ with independent exogenous $\mathbf{U}'$. 
\end{example}
In addition to policy implications, our adjusted effect requires weaker conditions for identifiability, compared to the natural indirect effect. Specifically, the natural indirect effect requires the deconfounding set between $M$ and $Y$ (given $R$) to not contain the descendants of $R$. This assumption may not hold in our health equity scenario, as shown by example~\ref{exam:1}. In contrast, our $Y(R=0,G_{M|R=1})$ only requires no existence of unobserved confounders between $M$ and $Y$, which is commonly made in the literature, including the natural indirect effect and other decomposition methods. 

Indeed, our definition is closely related to others in the literature. Specifically, when the linear equations refer to structural equations, our $Y(R=0,G_{M|R=1})$ degenerates to Oaxaca–Blinder decomposition \citep{oaxaca1973male, quirk2006administrative}. Besides, it has the same form as randomized intervention effects \citep{vanderweele2014effect, jackson2018decomposition}, if the covariate set provided in their works happens to be admissible. Further, our definition can be easily extended to \citep{jackson2020meaningful} that lets $M$ and $Y$ additionally condition on allowable variables. \emph{More importantly}, unlike these methods that assume without checking that the covariate set $\mathbf{O}$ happens to be admissible, our adjusted effect is endowed with a method that is provable to identify the admissible set among $\mathbf{O}$.

\section{Identifiability}
\label{sec.iden}

In this section, we discuss the identifiability property of $\delta_{M}$. We first give some basic assumptions. 

\begin{assumption}[SCM with no Unobserved Confounders]
\label{assum:scm}
  We assume that $\langle G, \mathcal{F}, \mathbf{U}, P(\mathbf{U}) \rangle$ is a structural causal model, with $G := (\mathbf{V},\mathbf{E})$ and $\mathbf{O} \cup \{Y,R\} \subset \mathbf{V}$. In addition, we assume for each $M \in \mathbf{M}$, there is no unobserved confounder between $M$ and $Y$. 
\end{assumption}

Note that we do not assume all variables in $\mathbf{V}$ are observable; rather, we only observe a subset of them $\{\mathbf{O},R,Y\} \subset \mathbf{V}$. The ``no unobserved confounder" condition, which does not additionally pre-specifies some covariate sets to be admissible, is weaker than the strong ignorability condition that is widely assumed in the literature of causal inference \citep{pearl2009causality, spirtes2000causation} and other causal decomposition works \cite{jackson2020meaningful, vanderweele2014effect}. Furthermore, we will show later that this assumption is necessary for the adjusted effect to be identifiable. 

In the following, we define the \emph{context variable}, which refers to the variable with no arrows going into them. 

\begin{definition}[Context Variable]
\label{def:context}
  We say $\mathcal{C} \subset \mathbf{O} \cup R$ is the set of \emph{context variables} if any $C \in \mathcal{C}$ is not causally affected by any variable in $\mathbf{V}$. 
\end{definition}
\begin{remark}
It naturally holds that $R$ is a context variable, \emph{i.e.}, $R \to \mathcal{C}$ since the exposure variable normally refers to the innate attributes. Similarly, other demographic variables such as gender, and ethnicity are also context variables. 
\end{remark}
Besides, as $Y$ refers to the outcome that is temporally reported later than others, it naturally holds that $Y$ does not causally influence other variables. Then we have {\small $\delta_M = \int yp(y|do(m),R=0)p(m|R=1)dydm - \int yp(y|R=0) dy$}, due to $(R \ind Y)_{G_{\overline{M},\underline{R}}}$, $(R \ind Y)_{G_{\underline{R}}}$, and $(R \ind M)_{G_{\underline{R}}}$. To identify $\delta_M$, all that is left is to identify $p(y|do(m),r)$. 

Next, we introduce Markovian and Faithfulness conditions. 
\begin{assumption}[Markovian and Faithfulness]
\label{assum:markov}
  The \emph{Markovian} means that the exogenous variables $\{U_i\}$ are independent. This implies $\forall$ disjoint sets $\mathbf{V}_i, \mathbf{V}_j, \mathbf{V}_{k}$, it holds that $\mathbf{V}_i \ind_d \mathbf{V}_j | \mathbf{V}_k \implies \mathbf{V}_i \ind \mathbf{V}_j | \mathbf{V}_k$. On the other hand, \emph{faithfulness} means $\mathbf{V}_i \ind \mathbf{V}_j | \mathbf{V}_k \implies \mathbf{V}_i \ind_d \mathbf{V}_j | \mathbf{V}_k$, where $\ind_d$ and $\ind$ respectively mean d-separation and probability independence.
\end{assumption}

This assumption, as was commonly made in the causal inference literature, allows us to implement conditional independence tests for causal discovery, which can provide a graphical criterion to identify $p(y|do(m),r)$. In the following, we show that it is almost necessary to assume no unobserved confounder in assump.~\ref{assum:scm} to identify $p(y|do(m),r)$. 

\begin{proposition}
\label{prop.back-door}
Suppose assumption~\ref{assum:markov} holds in the SCM $\langle G, \mathcal{F}, \mathbf{U}, P(\mathbf{U}) \rangle$, we have that the $p(y|do(m),r)$ is identifiable for all forms of $P(\mathbf{U})$ and $\mathcal{F}$, if and only if there exists an admissible set $\mathbf{B}_M$, \emph{i.e.}, $(M \ind Y|\mathbf{B}_M,R)_{G_{\underline{M}}}$. 
\end{proposition}

\begin{remark}
If there exists an unobserved confounder such that its path to $Y$ or $M$ is not blocked by $\mathbf{B}_M$, then $p(y|do(m),r)$ cannot be identified generally. Note that under the linear model, \citep{kasy2014instrumental, frolich2017direct} leverages instrumental variables for identifiability with unobserved confounders, however, assumed that $M(Z=z,u_M)$ increases with respect to the instrumental variable $Z$ for all individuals $u_M$. This condition may not hold in our health equity scenario. For example, if $Z:=R$ denotes race and $M$ denotes the income level, then it is unreasonable to assume that any person in the advantaged group has a higher income level than all persons in $R=0$. 
\end{remark}

The Prop.~\ref{prop.back-door} informs us to identify an admissible set, of which each $O_i$ is selected according to the type of the maximal ancestral graph (MAG) over $(O_i, M, Y)$. Specifically, we first list a broader family of types of MAGs in Fig.~\ref{fig:mag}, which are equivalent according to Prop.~\ref{prop.mag-equi}. We show that any set that contains variables belonging to type (a) and does not contain any variables in types (d,e,f) is admissible. To explain, note that the variable with the type (a) is sufficient to block each back-door path between $M$ and $Y$ as there is no unobserved confounder between $M$ and $Y$. Besides, the variable with types (b) and (c) does not perturb directed paths or induce new spurious features. Therefore, the inclusion of these variables can still make the set admissible. Types (d,e,f) are other members of the equivalence class. 
\begin{figure}[ht!]
    \centering
    \includegraphics[width=0.42\textwidth]{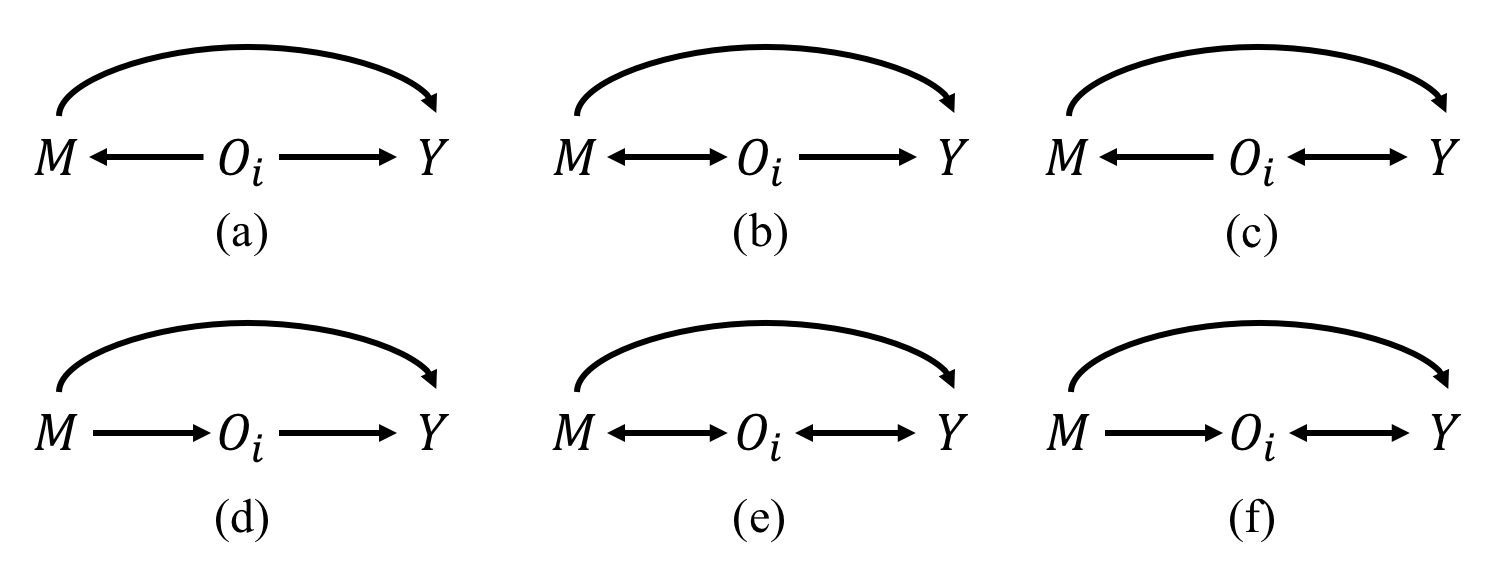}
    \caption{MAG of admissible sets: (a,b,c) and their equivalence classes: (d,e,f).}
    \label{fig:mag}
\end{figure}

Formally speaking, we define 
\begin{align*}
    \underline{\mathbf{B}}_M & :=\{O_i: \mathrm{MAG}(O_i,M,Y) \text{ is (a).}\}, \\
    \overline{\mathbf{B}}_M & :=\{O_i: \mathrm{MAG}(O_i,M,Y) \text{ belongs to (a,b,c).}\},
\end{align*}
and show that any set that contains $\underline{\mathbf{B}}_M$ and belongs to $\overline{\mathbf{B}}_M$ is an admissible set: 
\begin{theorem}
\label{thm.admiss}
Under assump.~\ref{assum:scm},~\ref{assum:markov}, any $\mathbf{B}_M$ such that $\underline{\mathbf{B}}_M \subset \mathbf{B}_M \subset \overline{\mathbf{B}}_M$ is admissible. 
\end{theorem}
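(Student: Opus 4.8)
The plan is to show that membership of $O_i$ in $\mathbf{B}^{\mathrm{big}}_{\mathrm{MAG}}(M)$ guarantees $O_i$ is not a descendant of $M$ (so it never perturbs the directed path $M\to Y$ nor opens a collider through $M$), while membership of $\mathbf{B}^{\mathrm{small}}_{\mathrm{MAG}}(M)$ inside $\mathbf{B}_M$ guarantees that all genuine back-door paths from $M$ to $Y$ are blocked; together these give the back-door criterion for $\mathbf{B}_M$ relative to $(M,Y)$ in $G$ conditional on $R$, which by the graphical translation of strong ignorability stated in Sec.~\ref{sec.pre} is exactly admissibility. Throughout I would work in the underlying DAG $G$ of Assumption~\ref{assum:scm} and use the Markovian/faithfulness Assumption~\ref{assum:markov} to pass freely between $d$-separation statements in $G$ and the conditional-independence statements that define the MAG edges; recall from the preliminaries that $\mathrm{MAG}(O_i,M,Y)$ encodes, for each pair, whether one is an ancestor of the other in $G$ (arrowhead vs.\ tail) and whether they are dependent given the third variable.

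The first step is the \emph{ancestry} claim: if $\mathrm{MAG}(O_i,M,Y)$ has an arrowhead at $O_i$ on both the $O_i$--$M$ and $O_i$--$Y$ marks that occur — i.e.\ in every configuration listed in $\mathbf{B}^{\mathrm{big}}_{\mathrm{MAG}}(M)$ the edge incident to $O_i$ either points \emph{into} $O_i$ from $M$/$Y$ or is bidirected — then $O_i\notin \mathrm{De}(M)$ in $G$. This is because an arrowhead at $O_i$ in the MAG means $O_i$ is not an ancestor of the corresponding neighbour; in particular whenever the edge is $M\gets O_i$ we directly read off $O_i\in\mathrm{An}(M)$, and whenever it is $M\leftrightarrow O_i$ we get $O_i\notin\mathrm{An}(M)$ and $M\notin\mathrm{An}(O_i)$, so in neither admitted case is $O_i$ a descendant of $M$. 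Hence $\mathbf{B}_M\subseteq\mathbf{B}^{\mathrm{big}}_{\mathrm{MAG}}(M)$ forces $\mathbf{B}_M\cap\mathrm{De}(M)=\emptyset$, which is the non-descendant half of the back-door criterion and simultaneously rules out $\mathbf{B}_M$ introducing any new spurious association by conditioning on a collider on the $M$-side.

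The second step is the \emph{blocking} claim. Consider any back-door path $\pi$ from $M$ to $Y$ in $G$, i.e.\ one with an arrowhead at $M$. I would argue $\pi$ is blocked by $\mathbf{B}_M\cup\{R\}$ by a case analysis on the node adjacent to $M$ on $\pi$ and, more globally, by showing that any such path creates, after marginalizing to $\{O_i,M,Y\}$ for an appropriate $O_i$ on $\pi$ (or for $R$), a dependence that is incompatible with the MAG patterns required for admissibility unless $O_i\in\mathbf{B}^{\mathrm{small}}_{\mathrm{MAG}}(M)$ — and every element of $\mathbf{B}^{\mathrm{small}}_{\mathrm{MAG}}(M)$ is in $\mathbf{B}_M$ by hypothesis, while $R$ is always conditioned on and is a context variable by Definition~\ref{def:context} (hence a non-collider source that any back-door path through $R$ must pass as a fork, and is blocked). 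The key technical device is that $O_i\in\mathbf{B}^{\mathrm{small}}_{\mathrm{MAG}}(M)$ means $\mathrm{MAG}(O_i,M,Y)=\{M\gets O_i\to Y\}\cup\{M\to Y\}$, i.e.\ $O_i$ is a common ancestor of both $M$ and $Y$ and $M$ is still an ancestor of $Y$ — exactly the profile of a confounder that must be adjusted for; faithfulness then ensures that failing to include such an $O_i$ leaves an open back-door path, so these are precisely the nodes that must lie in every admissible set, whereas nodes outside $\mathbf{B}^{\mathrm{small}}_{\mathrm{MAG}}(M)$ but inside $\mathbf{B}^{\mathrm{big}}_{\mathrm{MAG}}(M)$ may be freely included or excluded without opening a path.

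I expect the main obstacle to be the second step, specifically controlling back-door paths whose "offending" vertex is \emph{not} one of the $O_i$ that shows up cleanly in a three-node MAG — for instance a back-door path $M\leftarrow\cdots\leftrightarrow\cdots\to Y$ routed through several covariates, where the pairwise marginal MAG over a single $\{O_i,M,Y\}$ need not reveal the confounding. Handling this requires the inducing-path characterization of MAGs: I would show that if $\mathbf{B}_M\cup\{R\}$ fails to block some back-door path, then there is an inducing path between $M$ and $Y$ relative to $\mathbf{V}\setminus(\mathbf{B}_M\cup\{R\})$, contradict this against Assumption~\ref{assum:scm}'s "no unobserved confounder between $M$ and $Y$" together with the fact that all such confounders are observed and appear in one of the two MAG families, and conclude. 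The no-unobserved-confounder assumption is what makes the three-node MAG scan sufficient rather than merely necessary, so its use is exactly at this pinch point.
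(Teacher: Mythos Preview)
Your plan is essentially the paper's own: show (i) every $O_i\in\mathbf{B}^{\mathrm{big}}_{\mathrm{MAG}}(M)$ is a non-descendant of $M$ so no directed $M\to Y$ path is perturbed, and (ii) $\mathbf{B}^{\mathrm{small}}_{\mathrm{MAG}}(M)$ already contains every observed common cause of $M$ and $Y$, so it blocks all back-door paths, with the ``no unobserved $M$--$Y$ confounder'' part of Assumption~\ref{assum:scm} invoked exactly where you flag it --- to argue that the extra $O_i$ with a bidirected mark ($M\gets O_i\leftrightarrow Y$ or $M\leftrightarrow O_i\to Y$) cannot open a new spurious path that is not already blocked by some member of $\mathbf{B}^{\mathrm{small}}_{\mathrm{MAG}}(M)$. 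Two small corrections: your opening sentence about ``arrowhead at $O_i$ on both marks'' is misstated (in $M\gets O_i\to Y$ the arrowheads are at $M$ and $Y$, not at $O_i$), though your subsequent case split recovers the right conclusion $O_i\notin\mathrm{De}(M)$; and the claim that $\mathbf{B}_M\cap\mathrm{De}(M)=\emptyset$ ``simultaneously rules out'' new spurious associations is too strong --- it only handles collider-activation via descendants of $M$, and the genuinely delicate case (colliders among non-descendants) is precisely the one you defer to your final paragraph, which is also where the paper's proof does its real work.
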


According to Thm.~\ref{thm.admiss}, it is sufficient to identify any intermediate sets between $\underline{\mathbf{B}}_M$ and $\overline{\mathbf{B}}_M$. To achieve this goal, we first iteratively learn $\mathrm{MAG}(O_i, M, Y)$ for each $O_i \in \mathbf{O}$ via the FCI-JCI algorithm, which is provable to be sound and complete \cite{zhang2008causal} to identify the equivalence class of MAG. As members (d),(e),(f) are equivalent to types (a),(b),(c), we need to discriminate variables with types (a-c) from those with types (d-f). 

To achieve this goal, we propose to leverage the context variables in Def.~\ref{def:context}. Specifically, inspired by conditions (ii),(iii) in Prop.~\ref{prop.mag-equi}, we introduce the \emph{unshielded}-condition and the \emph{discriminate}-condition on the context variable: 

\begin{definition}
For the mediator $M$, we say $\mathcal{C}$ satisfies the \textbf{MAG-Equi} condition for $O_i$ and $M$, if either one of the following conditions holds: 
\begin{itemize}[noitemsep,topsep=0pt]
    \item \emph{Unshielded}-condition: there exists $C^i_M \in \mathcal{C}$ such that $\mathrm{MAG}(C^i_M, O_i, M)$ is $C^i_M \to O_i \circ\!-\!\circ M$; 
    \item \emph{Discriminate}-condition: there exists a pair of context variables $(C_M^{i,1},C_M^{i,2})$ such that \textbf{i)} $\mathrm{MAG}(C^{i,1}_M, O_i, M)$ is $C^{i,1}_M \to M \circ\!-\!\circ O_i$; and \textbf{ii)} $C^{i,2}_M$ is adjacent to $M$ and not adjacent to $Y$ in $\mathrm{MAG}(C^{i,2}_M, O_i, M, Y)$.
\end{itemize}
\end{definition}

As the name suggests, the \emph{unshielded}-condition and \emph{discriminate}-condition respectively exploit the unshielded collider criterion and the discriminating path criterion in Prop.~\ref{prop.mag-equi} to identify different classes of MAGs, with the assistance of context variable. In the following, we will show that if $\mathcal{C}$ satisfies the \textbf{MAG-Equi} condition, we can discriminate types (a,b,c) from (d,e,f) in Fig.~\ref{fig:mag}. 

\textbf{Identify $\overline{B}_M$ when the \textbf{MAG-Equi} condition holds.} The following lemma summarizes this result. 

\begin{lemma}
\label{lemma.detect-large}
Suppose assumptions~\ref{assum:scm},~\ref{assum:markov} hold and $\mathcal{C}$ satisfies the \textbf{MAG-Equi} condition. Then if $\mathcal{C}$ satisfies the \textbf{MAG-Equi} condition, we can discriminate types (a,b,c) from (d,e,f) in Fig.~\ref{fig:mag}. That is, we can determine whether $O_i \in \overline{\mathbf{B}}_M$. 
\end{lemma}
\begin{proof}[Proof-sketch]
Indeed, both \emph{unshielded}-condition and \emph{discriminate}-condition utilize ``unshielded colliders" criterion in Prop.~\ref{prop.mag-equi}. Specifically, {\small $C^i_M \to O_i \circ\!-\!\circ M$} eliminates types (d,e,f) with $M \circ\!\!\!\to O_i$. While for \emph{discriminate}-condition, the {\small $C^{i,1}_M \to M \circ\!-\!\circ O_i$} first eliminates types (d,f) with $M \gets\!\!\!\circ O_i$, in which $O_i$ is an unshielded collider. Further, the MAG structure over {\small $(C^{i,2}_M, M, O_i, Y)$} explores the discriminating path {\small $\langle C^{i,2}_M, M, O_i, Y \rangle$} for $O_i$ to eliminate type (e) with {\small $M \leftrightarrow O_i \leftrightarrow Y$}. Please refer to the appendix for details. 
\end{proof}
The \textbf{MAG-Equi} condition requires the existence of a context variable that does not simultaneously affect $O_i$ and $M$. This condition can hold for some $O_i$ when there are multiple context variables, such as gender, ethnicity, race, age, religious affiliation, and marital status. For instance, if $O_i,M,Y$ respectively denote education level, working style and treatment outcome, it is reasonable and hence probable for ``religious affiliation" to determine the working style $M$ but not affect the education level $O_i$ and $Y$. In this regard, the \emph{discriminate}-condition is satisfied when $C_M^{i,1}=C_M^{i,2}$ and denotes ``religious affiliation". Similarly, the \emph{unshielded}-the condition can be satisfied when $C_M^{i}$ denotes ``race/ethnicity" as it may not affect the working style $M$ given the education level $O_i$.

However, this \textbf{MAG-Equi} condition may uniformly hold for all $O_i$, especially when there are few context variables. In this case, lemma~\ref{lemma.detect-large} cannot be applied to determine whether $O_i \in \overline{\mathbf{B}}_M$. To address this problem, inspired by Thm.~\ref{thm.admiss}, we turn to discriminate type (a) from others, \emph{i.e.}, whether $O_i \in \underline{\mathbf{B}}_M$, by leveraging heterogeneity induced by the exposure variable $R$ to identify the causal directions $O_i \to M, Y$ that characterizes type (a).

\textbf{Identify $\underline{\mathbf{B}}_M$ when the MAG-Equi condition is violated.} We split data into multiple domains according to the value of $R$. The heterogeneity among these domains can help us to determine causal directions by exploiting changed causal modules \cite{huang2020causal}. This requires the following assumption, namely \emph{faithfulness} at the distributional level, which was firstly proposed by \cite{huang2020causal}. 

\begin{assumption}[\citep{huang2020causal}]
\label{assum:faith}
If $O_i \to O_j$ and at least $R \to O_i$ or $R \to O_j$ holds, then {\small $\{P^r(O_i|O_j,\mathbf{X})\}_r$} is dependent to {\small $\{P^r(O_j|\mathbf{X})\}_r$}, where $\mathbf{X}$ is the minimal deconfounding set \footnote{$X$ is a deconfounding set between $O_i$ and $O_j$ if $O_i \ind O_j|X$ and $X \cap (\mathrm{De}(O_i) \cup \mathrm{De}(O_j)) = \emptyset$.}. 
\end{assumption}

Under this assumption, we have the following result: 
\begin{lemma}
\label{lemma.detect-small}
Under assump.~\ref{assum:scm}-\ref{assum:faith}, we have {\small $O_i \in \underline{\mathbf{B}}_M$} iff there exists $\mathbf{X}^i_M$ such that \textbf{i)} {\small $\{P(M,Y|O_i,\mathbf{X}^i_M,r)\}_r$} and {\small $\{P(O_i|\mathbf{X}^i_M,r)\}_r$} are independent; \textbf{ii)} {\small $\{P(M|\mathbf{X}^i_M,r)\}_r$} and {\small $\{P(O_i|M,\mathbf{X}^i_M,r)\}_r$} are dependent; and \textbf{iii)} {\small $\{P(Y|\mathbf{X}^i_M,r)\}_r$} and {\small $\{P(O_i|Y,\mathbf{X}^i_M,r)\}_r$} are dependent.
\end{lemma}

To implement these independence tests, we correspondingly examine whether {\small $\Delta_{O_i,Y \gets M|\mathbf{X}_M^i} < \alpha$},  {\small $\Delta_{O_i \to M|\mathbf{X}_M^i} \geq \alpha$} and {\small $\Delta_{Y \to M|\mathbf{X}_M^i} \geq \alpha$} with some $\alpha > 0$ to respectively determine whether \textbf{i)}, \textbf{ii)} and \textbf{iii)} hold, derived from Hilbert-Schmidt Independence Criterion (HSIC) norm \cite{gretton2007kernel} to measure the dependency between two sets of distributions \cite{huang2020causal}. Compared to lemma~\ref{lemma.detect-large}, the lemma~\ref{lemma.detect-small} does not require the \textbf{MAG-Equi} condition, however, is more computationally expensive as it requires searching over all subsets of all $O_i$ with its type of MAGs belonging to (a-f) in Fig.~\ref{fig:mag}, \emph{i.e.}, {\small $\mathbf{B}^{\mathrm{all}}_M:=\{O_i: \mathrm{MAG}(O_i,M,Y) \text{ belongs to (a-f).}\}$}. We summarize lemma~\ref{lemma.detect-large},~\ref{lemma.detect-small} into the following theorem.

\begin{theorem}[Main Theorem]
\label{thm.iden}
Under assump.~\ref{assum:scm}-\ref{assum:faith}, there exists an identifiable admissible set $\mathbf{B}_M$ with $\underline{\mathbf{B}}_M \subset \mathbf{B}_M \subset \overline{\mathbf{B}}_M$. Particularly, $\mathbf{B}_M$ respectively degenerates to $\overline{\mathbf{B}}_M$ and $\underline{\mathbf{B}}_M$ if the \textbf{MAG-Equi} condition in lemma~\ref{lemma.detect-large} holds and does not hold for all $O_i \in \mathbf{O}$. 
\end{theorem}

\begin{remark}
This analysis can easily extend to the form in \citep{jackson2020meaningful} that additionally conditions on some ``allowable variables". As these variables are assumed to not be affected by $R$ and $M$, conditioning on these variables will not induce new spurious features or block directed paths. 
\end{remark}

With Thm.~\ref{thm.iden}, we can identify the adjusted effect $\delta_M$ and unadjusted effect $\zeta_M$ via $\mathbf{B}_M$ as:
{\small
\begin{align*}
    \delta_M & = \mathbb{E}_{m|R=1}\left[\mathbb{E}_{\mathbf{b}_M|R=0}\left[\mathbb{E}[Y|m,\mathbf{b}_M,R=0]\right]\right] - \mathbb{E}^0[Y], \\ 
    \zeta_M & = \mathbb{E}^1[Y] - \mathbb{E}_{m|R=1}\left[\mathbb{E}_{\mathbf{b}_M|R=0}\left[\mathbb{E}[Y|m,\mathbf{b}_M,R=0]\right]\right], 
\end{align*}
}
where $\mathbb{E}^r[Y]$ denotes $\mathbb{E}[Y|R=r]$ for $r=0,1$.

\section{Estimation}
\label{sec.estimate}

Algorithm~\ref{alg.estimation} summarizes the whole procedure to calculate $\delta_M, \zeta_M$ for each $M$. Roughly speaking, it contains \textbf{i)} detect all mediators $\mathbf{M}$; then for each $M \in \mathbf{M}$, \textbf{ii)} identify $\mathbf{B}_M$ and \textbf{iii)} estimate $\delta_M, \zeta_M$ by estimating $\mathbb{E}(Y|m,\mathbf{b}_{M},R=0)$ and $p(\mathbf{b}_{M}|R=r)$. 

\begin{algorithm}[ht!]
\caption{Estimate $\delta_M, \zeta_M$ for $M \in \mathbf{M}$.}
\label{alg.estimation}
\begin{flushleft}
\textbf{INPUT:} $\{r^{(i)}, o^{(i)},y^{(i)}\}_{i=1}^n$ and $\alpha > 0$. \\
\textbf{OUTPUT:} $\hat{\delta}_M$ for each $M \in \mathbf{M}$. 
\end{flushleft}
\begin{algorithmic}[1]
	\STATE Detect {\small $\mathbf{M}: = \{O_i|O_i \not\perp R,Y, O_i \not\perp R|Y, O_i \not\perp Y|R\}$}.
	\STATE For each $M \in \mathbf{M}$, 
	\STATE \quad Initialize {\small $\mathbf{B}_M=\emptyset$}, {\small $\mathbf{B}^{\mathrm{all}}_M = \emptyset$}. Denote $\mathbf{O}_M:=\mathbf{O} \backslash M$.
	\STATE \quad Run FCI-JCI to obtain {\small $\{\mathrm{MAG}(O_i,M,Y)\}_{O_i \in \mathbf{O}_M}$}.
	\STATE \quad Detect $\mathbf{B}^{\mathrm{all}}_M$ with types (a-f) in Fig.~\ref{fig:mag}. 
	\STATE \quad For each $O_i \in \mathbf{B}^{\mathrm{all}}_M$, 
	\STATE \quad \quad $\forall C_i \in \mathcal{C}$, run FCI-JCI to obtain {\small $\mathrm{MAG}(C_i,O_i,M)$} and {\small $\mathrm{MAG}(C_i,M,O_i,Y)$}. 
	\STATE \quad \quad If \textbf{MAG-Equi} holds, add $\mathbf{B}_M=\mathbf{B}_M \cup O_i$.
	\STATE \quad \quad Else, search over {\small $\mathbf{S} \subset \mathbf{B}^{\mathrm{all}}_{M}$} such that {\small $\Delta_{M,Y \gets O_i|\mathbf{S}} < \alpha$}, {\small $\Delta_{O_i \gets M|\mathbf{S}} > \alpha$} and {\small $\Delta_{Y \gets M|\mathbf{S}} > \alpha$}, add $\mathbf{B}_M = \mathbf{B}_M \cup O_i$; else continue.
	\STATE \quad Estimate {\small $\mathbb{E}(Y|m,\mathbf{b}_{M},R=0)$}, {\small $\mathbb{E}(Y|R=r)$} and {\small $p(\mathbf{b}_{M}|R=r)$} for $r=0,1$.
	\STATE \quad Estimate $\hat{\delta}_M$ and $\hat{\zeta}_M$. 
\end{algorithmic}
\end{algorithm}

For $\mathbb{E}(Y|m,\mathbf{b}_{M},R=0)$, we can choose several off-the-shelf predictors such as tree-based or generalized linear models for fitting. To estimate $p(\mathbf{b}_{M}|R=r)$ with $\mathbf{B}_M = \{B_i\}_{i=1}^s$, we write $p(\mathbf{b}_{M}|R=r) = \Pi_{i=1}^s p(b_i|b_{1:i-1},R=r)$ and implement kernel density estimation (KDE) to estimate $p(b_i|b_{1:i-1},R=r)$ if $b_i$ is continuous and use soft-max if $b_i$ is discrete. If $s$ is large, we use Markov chain Monte Carlo to generate data from {\small $p(\mathbf{b}_{M}|R=r)$}. 

\textbf{Computational Complexity.} Under the worst case, \emph{i.e.}, the \textbf{MAG-Equi} condition is violated for all $O_i$, one has to search over $S \subset \mathbf{B}^{\mathrm{all}}_M$. If we follow an ascending order in terms of $|S|$, the overall complexity to estimate $\delta_M,\zeta_M$ for each $M$ is with order $(p*|\mathcal{C}|)^3(\mathbf{B}^{\mathrm{all}}_M)^{|\underline{\mathbf{B}}_M|}$ with $p:=|\mathbf{O}|$, where $p^3$, $|\mathcal{C}|^3$ are respectively spent for calculating {\small $\{\mathrm{MAG}(O_i,M,Y)\}_{O_i}$} and {\small $\{\mathrm{MAG}(C_i,O_i,M)\}_{C_i}$, $\{\mathrm{MAG}(C_i,M,O_i,Y)\}_{C_i}$}. When the \textbf{MAG-Equi} condition holds for all $O_i$, the complexity is only with order $(p*|\mathcal{C}|)^3$.

\section{Experiment}
\label{sec.experiment}

In this section, we evaluate our method on a synthetic dataset and a spine disease dataset, namely \emph{Spine Patient Outcomes Research Trial} (SPORT) \cite{birkmeyer2002design, weinstein2008surgical, weinstein2006surgical1, weinstein2006surgical2, pearson2012should}.

\subsection{Simulation}
\label{sec.sim}

\noindent \textbf{Data Generation.} We follow the causal graph in Fig.~\ref{fig:graph-sim} to generate data via: {\small $R \sim \emph{Bern}(1,0.5)$}; {\small $X \gets f_X(\varepsilon_x)$ ($\varepsilon_x \sim \mathcal{N}(\mu_x,\sigma_x^2)$)}; {\small $M_1 \gets f_{M_1}(x,r) := \alpha_0 + \alpha_RR + \alpha_XX + \varepsilon_{M_1}$; $M_2 \gets f_{M_2}(x,r,m_1) := \beta_0 + \beta_1M_1 + \beta_RR + \beta_XX + \varepsilon_{M_2}$; $Y \gets f_Y(x,r,m_1,m_2):= \rho_0 + \rho_1M_1 + \rho_2M_2 + \rho_XX + \rho_RR + \varepsilon_Y$}. Our goal is to estimate adjusted and unadjusted effects for $M_1$, $M_2$, and $\{M_1,M_2\}$, respectively. 

\begin{figure}[ht!]
\centering
    \includegraphics[width=0.22\textwidth]{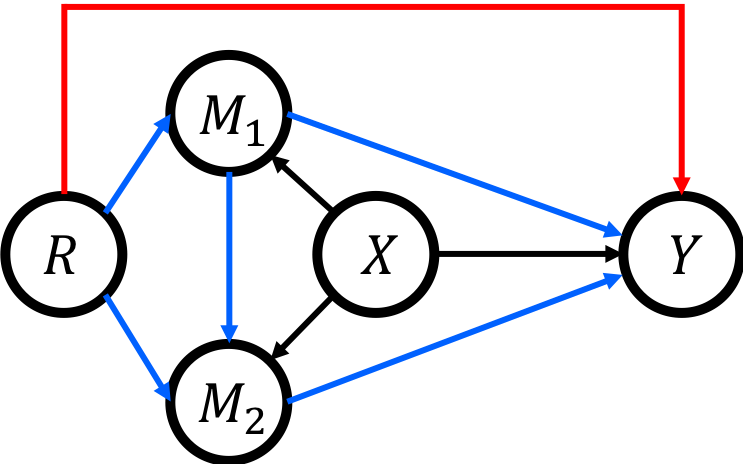} 
    \caption{Causal graph to generate data.}
    \label{fig:graph-sim}
\end{figure} 

Set {\small $(\alpha_0,\alpha_R,\alpha_X):= (0,2,3), (\beta_0, \beta_1, \beta_R, \beta_X) := (0,2,3,4)$} and {\small $(\rho_0,\rho_1,\rho_2,\rho_R,\rho_X)$} {\small $:= (0,2,3,4,5)$}, we have {\small $\delta_{M_1} = \rho_1 \mathbb{E}(M_1(1) - M_1(0)) + \rho_2\beta_1 \mathbb{E}(M_1(1)-M_1(0)) = (\rho_1 + \rho_2\beta_1)\alpha_1 = 16$}; {\small $\delta_{M_2} = \rho_2 \left( \beta_1\mathbb{E}(M_1(1) - M_1(0)) + \beta_2 \right) = \rho_2 (\beta_1\alpha_1 + \beta_2) = 21$}; {\small $\delta_{M_1,M_2} = (\rho_1 + \rho_2\beta_1) \mathbb{E}(M_1(1) - M_1(0)) + \rho_2 \beta_2 = (\rho_1 + \rho_2\beta_1)\alpha_1 + \rho_2 \beta_2 = 25$}; {\small $\zeta_{M_1} = \mathbb{E}(Y(1)) - \mathbb{E}(Y(0)) -  \delta_{M_1}= 14$; $\zeta_{M_2} = 9$} and {\small $\zeta_{M_1,M_2} = 5$}. 
\begin{figure*}[ht!]
    \centering
    \includegraphics[width=0.8\textwidth]{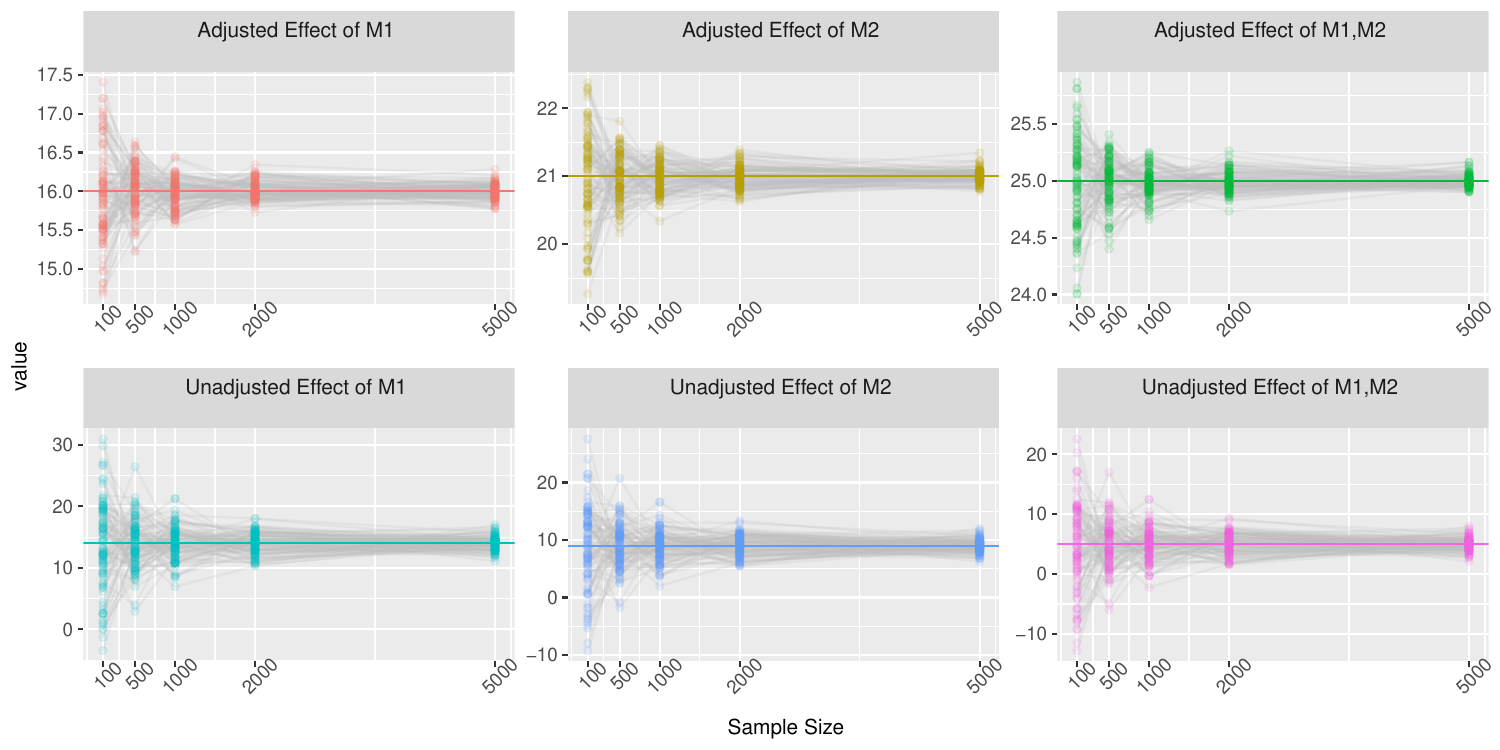}
    \caption{Estimation Results on Synthetic Dataset.}
    \label{fig:sim-result}
\end{figure*}

\begin{figure*}[ht!]
    \centering
    \includegraphics[width=6.0in]{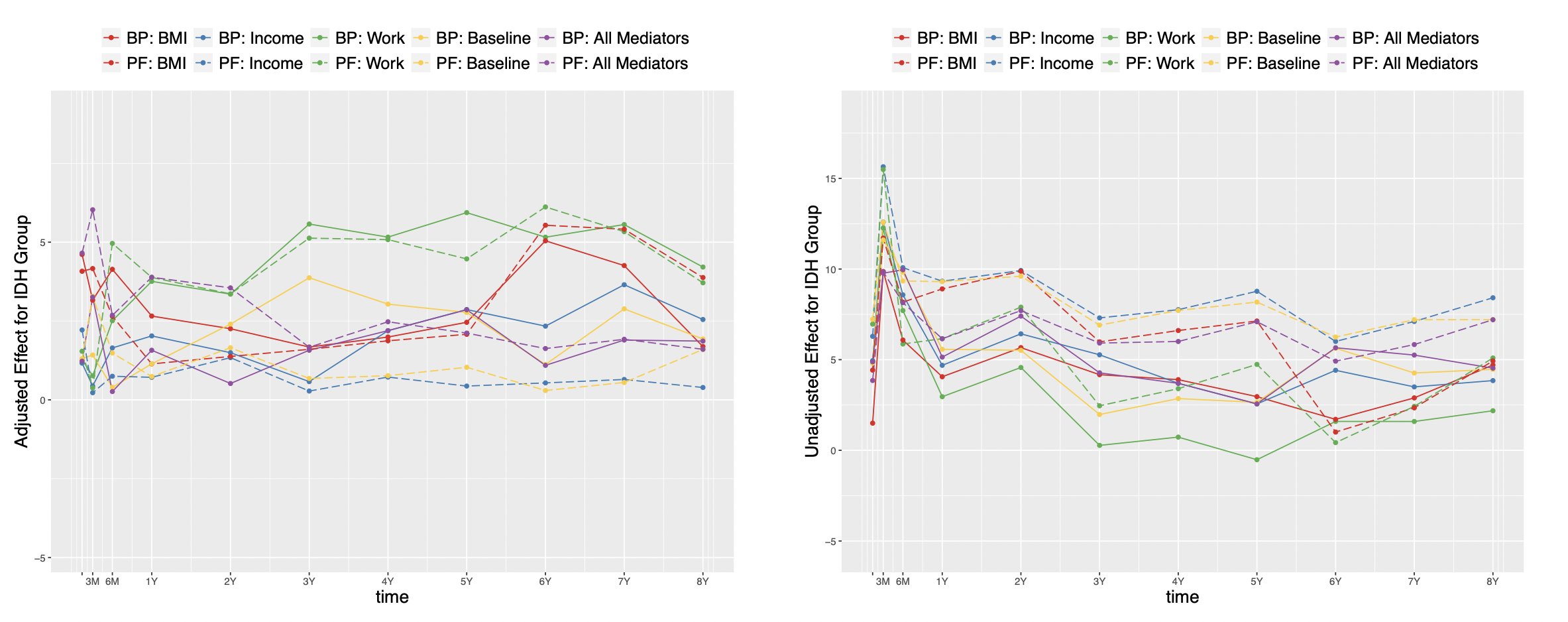}
    \caption{Adjusted/Unadjusted Effects on IDH data.}
    \label{fig:IDH}
\end{figure*}

\noindent \textbf{Implementation.} We follow algorithm~\ref{alg.estimation} wih $\alpha = 0.05$ and linear model to fit $\mathbb{E}(Y|m,\mathbf{b}_M,R=0)$. 

\noindent \textbf{Results.} Fig.~\ref{fig:sim-result} shows that our estimations can asymptotically approximate the ground-truth value as $n$ grows.

\subsection{Spine Patient Outcomes Research Trial (SPORT)}
\textbf{Dataset.} The Spine Patient Outcomes Research Trial (SPORT) dataset was designed to investigate the effectiveness of spine surgery for the three most common reasons for low back pain (LBP) surgical disease. Low back pain-related conditions remain one of the most controversial diseases, as to, whether surgery is more effective than non-surgical treatments \cite{birkmeyer2002design}. As reported in \cite{vos2017global}, LBP is a top-5 leading cause of disability globally, which has brought heavy burdens economically and socially to health systems globally. The SPORT study enrolled 2,505 patients in the United States from March 2000 to February 2005. There were three groups with associated LBP: 1) intervertebral disc herniation (IDH), 2) spinal stenosis (SPS), and 3) degenerative spondylolisthesis (DS). The race fell into three main groups: Asian Americans, African Americans, and White–Hispanic and non-Hispanic Americans. To investigate the long-term surgical effects (as LBP is a chronic disease), SPORT studied outcomes using validated measures. Primary outcomes were (i.e., $Y$) bodily pain (BP) and physical functioning (PF). Both are scaled from 0 to 100 (the higher, the better). Longitudinal data were collected at various intervals over 8 years after implementing surgical and non-surgical treatments (initially 6 weeks, 12 weeks, 6 months, 1 year, and yearly thereafter). For each patient, SPORT recorded the baseline demographic characteristics, social-economic factors such as income, working style, education years, and other commodities such as hypertension, diabetes, weight, smoking history, \emph{etc.} \cite{weinstein2008surgical, birkmeyer2002design}.

\textbf{Implementation.} We observe large differences between the white and non-white groups (which are mainly Asian and African Americans) in terms of BP and PF scores, as shown in Fig.~\ref{fig:total} in the appendix. We follow algorithm~\ref{alg.estimation} with $\alpha=0.05$ and use a random forest predictor with $50$ estimators to estimate $\mathbb{E}(Y|\mathbf{B}_M,M,R)$. The detected mediator set $\mathbf{M}$ contains four features: income and working styles (social economics status) Body mass index (BMI) and the outcome score at baseline (health status). We then calculate the adjusted effects for the white and non-white groups ($R = 1$ means white) for DS, SPS, and IDH with $n=601$, $634$, and $1,195$ patients; and $p=18$ covariates. More implementation details can be found in the appendix.

\textbf{Results Analysis.} Due to space limits, we only show the results on the IDH disease and placed other results in the appendix. As shown in Fig.~\ref{fig:IDH}, both adjusted and unadjusted effects are non-ignorable, which suggests more compensations for minorities (income, working status), healthier daily habits (BMI), and more personalized health care (the outcome at baselines) such as better medications. Particularly, for the working style and the income which are more of practical interest for policy-making, if we assign them to the advantaged group’s distribution, the disparity in both BP and PF scores can be decreased by a noticeable margin. To explain, we observed that the proportion of the non-white group with no or low-level income is 5.4\% more than that of the white patient subgroup and, that the proportion of employment of the black patient group was nearly 18\% less than that of the white subgroup. This suggests the possibility that alleviating disparities in outcomes can be achieved by focusing on one’s employment status and income. Besides, it is interesting to note that the BMI has significant adjustment effects, which help explain the disparities in terms of outcomes after surgical treatments. Finally, one can observe that making policies on all mediators (purple curve) does not necessarily have better effects, since the formula of $\delta_M$ does not necessarily increase as $|M|$ increases. To illustrate, consider the group $R=0$ that does not have good living or dietary habits, adjusting all factors may let them treat their disease lightly (e.g., choosing non-surgical treatment), which may in turn result in worse outcomes.

\section{Conclusion}
\label{sec.conclusion}
We define the adjusted and unadjusted effects under the structural-causal model framework. Compared to existing methods, our method has better policy implications, by eliminating all sources of disparity to measure the effect of a policy on the mediator of interest. Additionally, unlike traditional methods that rely on the strong ignorability condition for identifiability, we can identify the admissible set via causal discovery. Our methods are efficient and easy to implement. In the SPORT dataset, we identified important, interpretable socioeconomic variables that significantly impact disparities in terms of short and long-term treatment effects. For limitations, our methods require no unobserved confounder conditions. That said we plan to explore the ‘relaxation’ of this condition for future work.

\newpage
\bibliographystyle{IEEEtran} 
\bibliography{reference} 

\appendix
\onecolumn 

\section{Theoretical Analysis}
\label{app:thm}

Our theoretical analysis is based on Prop. 3.1 and some MAG-related definitions. These definitions can be found in \cite{zhang2008causal}. We will also introduce them here for completeness.

We begin with the \emph{mixed graph}, which is a graph with two kinds of edges: directed ($\to$), and bi-directed ($\leftrightarrow$). Note that this definition is slightly different from that in \cite{zhang2008causal} as we do not require it to contain an undirected graph ($-$), which means two variables are dependent due to latent selection variables. According to our problem setting, the observed variables are not assumed to be dependent on some unobserved selection variables, so it is unnecessary for our definition to contain the undirected graph. We first introduce the definition of \emph{ancestral graph}.

\begin{definition}[Ancestral Graph]
An ancestral graph $G$ is a mixed graph with no directed and almost directed cycles. Here, the almost directed cycle occurs when $A \leftrightarrow B$ and $B \in \mathrm{An}_{G}(A)$. 
\end{definition}

Next, we introduce the definition of inducing path, which can further define \emph{maximal ancestral graph}.

\begin{definition}[Inducing Path]
A path $p$ from $X$ to $Y$ in $G$ is an inducing path if every non-endpoint vertex on $p$ is a collider and is either $\in \mathrm{An}_{G}(X)$ or $\in \mathrm{An}_{G}(Y)$.
\end{definition}

\begin{definition}[Maximal Ancestral Graph]
An ancestral graph is Maximal Ancestral Graph if, for any two non-adjacent variables, there is no inducing path between them. 
\end{definition}

Next, we introduce the definition of \emph{unshielded collider} and \emph{discriminating path}, which form two main principles in Prop.~3.1. 

\begin{definition}[Unshielded Collider]
A triple $\langle X,Y,Z \rangle$ is called an unshielded collider if the edge between $(X,Y)$ and the one between $(Y,Z)$ are pointed to $Y$. 
\end{definition}

\begin{definition}[Discriminating path]
In a MAG, a path $p=\langle X,V_1,...,V_p, W,Y \rangle$, is a discriminating path for $W$ if \textbf{i)} $p$ includes at least three edges; \textbf{ii)} $W$ is a non-endpoint on $p$, and is adjacent to $Y$. \textbf{iii)} $X$ is not adjacent to $Y$, and each $V_i$ for $i \in \{1,...,p\}$ is a collider on $p$ and $\in \mathrm{Pa}(Y)$.
\end{definition}

\begin{proof}[Proof of Prop. 5.1]

The \cite{galles2013testing} provided four conditions for $p(y|do(m),r)$ to be identifiable, and the theorem 4.3.2 in \cite{pearl2009causality} proved that at least one of these conditions holds. These four conditions are:
\begin{itemize}
    \item There is no-back door path from $M$ to $Y$.
    \item There is no directed path from $M$ to $Y$ in $G$. 
    \item There exists a set of nodes $\mathbf{B}$ that blocks all back-door paths from $M$ to $Y$. 
    \item There exists sets of nodes $Z_1$ and $Z_2$ such that:
    \begin{enumerate}
        \item $Z_1$ blocks every directed path from $M$ to $Y$;
        \item $Z_2$ blocks all back-door paths between $Z_1$ and $Y$;
        \item $Z_2$ blocks all back-door paths between $M$ and $Z_1$;
        \item $Z_2$ does not activate any back-door paths between $M$ and $Y$. 
    \end{enumerate}
\end{itemize}

Since there is a directed edge $M \to Y$, therefore the second and the fourth conditions are violated. As we condition on $R$, so the first condition can be modified to no back-door path from $M$ to $Y$, given $R$. In this regard, the $\mathbf{B}$ in the third condition degenerates to $\emptyset$. So we have $p(y|do(m),r)$ is identifiable if and only if there exists $\mathbf{B}$ that block all back-door paths from $M$ to $Y$. If this exists unobserved confounder between $M$ to $Y$, this sufficient and necessary condition will not hold, making $p(y|do(m),r)$ unidentifiable. 
\end{proof}

\begin{proof}[Proof of Theorem. 5.1]
It is sufficient to show that the \textbf{i)} $\underline{\mathbf{B}}_M$ contain all mediator-outcome confounders and meanwhile \textbf{ii)} the variable in $\overline{\mathbf{B}}_M$ does not perturb existing directed paths from $M$ to $Y$ and not induce new spurious features, according to \cite{pearl2009causality}. For \textbf{i)}, note that for each mediator-outcome confounder, it causally influences both $M$ and $Y$, which means this there exists directed paths from $O_i$ to $M$ and $O_i$ to $Y$. Therefore, the maximal ancestral graph over $(M,O_i,Y)$ is $\{M \gets O_i \to Y\} \cup \{M \to Y\}$. Therefore, we have $O_i \in \underline{\mathbf{B}}_M$. For \textbf{ii)}, for each $O_i \in \overline{\mathbf{B}}_M$, if it lies in the directed path from $M$ to $Y$, then the learned MAG must be $\{M \to O_i \to Y\} \cup \{M \to Y\}$, which is not contained in $\overline{\mathbf{B}}_M$. For $O_i$ with the MAG (in addition to the edge $M \to Y$) with$M \gets O_i \leftrightarrow Y$ or $M \leftrightarrow O_i \to Y$. Since there is no unobserved confounder between $M$ and $Y$, then there respectively exists an observed confounder that blocks the path between $O_i$ and $Y$ and the path between $M$ and $O_i$. Besides, such a confounder is in $\underline{\mathbf{B}}_M$. Therefore, such an $O_i$ will not induce new spurious features, since even there is a collider induces this path, there exists a variable in $\underline{\mathbf{B}}_M$ that can block this path. In this regard, any set that contains $\underline{\mathbf{B}}_M$ and belong to $\overline{\mathbf{B}}_M$ will satisfy both \textbf{i)} and \textbf{ii)}. Therefore, this set is admissible. 
\end{proof}

\begin{proof}[Proof of Lemma~5.1]
It is easy to note that we have $O_i \in \overline{\mathbf{B}}_M$ if we can discriminate the associated types of MAGs from those in $\tilde{\mathbf{B}}_{\mathrm{MAG}}(M)$. Specifically, the MAGs in $\tilde{\mathbf{B}}_{\mathrm{MAG}}(M)$ (we omit $M \to Y$ as it is included in all MAGs) are: i) $M \gets O_i \to Y$; ii) $M \gets O_i \leftrightarrow Y$; iii) $M \leftrightarrow O_i \to Y$; iv) $M \to O_i \to Y$; v) $M \to O_i \leftrightarrow Y$; vi) $M \leftrightarrow O_i \leftrightarrow Y$. Therefore, our goal is to discriminate i), ii), iii) from others. Note that when the first condition in Lemma~5.1 is satisfied, we can discriminate i) and ii) using unshielded collider condition in Prop.~3.1, as the $M \to O_i$ or $M \leftrightarrow O_i$ can make $\langle O_i, M, C_M^i \rangle$ unshielded collier, therefore we have $O_i \in \overline{\mathbf{B}}_M$. If the second condition is satisfied, we can first discriminate i), ii), iii), vi) from iv) and v) using unshielded collider condition; then to eliminate vi) which can induce spurious path for instance $M \gets S_1 \to O_i \gets S_2 \to Y$ with $S_1, S_2$ unobserved, we note that $\langle C_M^j, M, O_i, Y \rangle$ form a discriminating path for $O_i$ in i), ii), iii) and vi), with the difference that the $O_i$ in vi) is a collider on the path $M \leftrightarrow O_i \leftrightarrow Y$ in vi) but is not a collider on the same path in i), ii) and iii). Therefore, we can discriminate i), ii), iii) from vi). In this regard, we can determine whether $O_i \in \overline{\mathbf{B}}_M$ if either one of the condition holds in Lemma 5.1. 
\end{proof}

\begin{proof}[Proof of Lemma 5.2]
For the $O_i$ such that $M \gets O_i \to Y$ in the MAG over $(O_i,M,Y)$, we have that the deconfounding sets between $O_i$ and $M$, and that between $M$ and $Y$, can be observed, thus we there exists $\mathbf{X}_i$ which contains both deconfounding sets, such that $\{P^r(M|O_i)\}$, $\{P^r(Y|O_i)\}$ are independent to $\{P^r(O_i)\}$, because they have disentangled causal mechanisms. Besides, we have $\{P^r(O_i|M)\}$, $\{P^r(O_i|Y)\}$ are respectively dependent to $\{P^r(M)\}$ and $\{P^r(Y)\}$ under assumption~5.3, since they are entangled. On the other hand, we will show that this conclusion does not hold for $O_i$ with other types of MAG. Similarly, the MAGs in $\tilde{\mathbf{B}}_{\mathrm{MAG}}(M)$ (we omit $M \to Y$ as it is included in all MAGs) are: i) $M \gets O_i \to Y$; ii) $M \gets O_i \leftrightarrow Y$; iii) $M \leftrightarrow O_i \to Y$; iv) $M \to O_i \to Y$; v) $M \to O_i \leftrightarrow Y$; vi) $M \leftrightarrow O_i \leftrightarrow Y$. For the MAG with $M \leftrightarrow O_i$, then for any set $\mathbf{A}$, the independent relations for  ($\{P^r(M|\mathbf{A},O_i)\}$,$\{P^r(O_i|\mathbf{A})\}$) and ($\{P^r(M|\mathbf{A})\}$,$\{P^r(O_i|M,\mathbf{A})\}$) simultaneously hold or not hold, determined by whether $A$ is the deconfounding set between $O_i$ and $M$. This conclusion similarly holds for the MAG with $O_i \leftrightarrow Y$. Therefore, we can discriminate MAGs ii), iii),v), vi) from others. For the MAG with $M \to O_i \to Y$, if there exists a set $\mathbf{A}$ such that $\{P^r(M|\mathbf{A},O_i)\}$ is $\{P^r(O_i|\mathbf{A})\}$ independent, then $\mathbf{A}$ must block the directed paths between $M$ and $O_i$. Suppose otherwise there exists an unblocked directed path from $M$ to $O_i$; in this regard, both distributions are entangled since they are affected by the mechanism $P^r(O_i|\mathbf{A},M)$. Then if $\mathbf{A}$ blocks the directed paths between $M$ and $O_i$, the independence between $\{P^r(O_i|\mathbf{A},M)\}$ and $\{P^r(M|\mathbf{A})\}$ also holds. This can help us further eliminate type iv). As a result, we have $O_i \in \underline{\mathbf{B}}_M$. 
\end{proof}

\begin{proof}[Proof of Theorem 5.2]
Under assumptions 5.1, 5.2, and 5.3, we can detect $\underline{\mathbf{B}}_M$. Besides, if condition in Lemma~5.1 holds for $O_i$, we can determine whether $O_i \in \overline{\mathbf{B}}_M$. Therefore, we have the detected $\mathbf{B}_M$ is 
$$
\underline{\mathbf{B}}_M \cup \{O_i|\text{the condition in Lemma~5.1 for $O_i$ holds and } O_i \in \overline{\mathbf{B}}_M\}, which \supset \underline{\mathbf{B}}_M and \subset \overline{\mathbf{B}}_M.
$$
The proof is completed. 
\end{proof}

\section{Experiment on SPORT}
\label{app:exp}

\subsection{Implementation Details}

Our implementation is composed of three steps: \textbf{i)} detect the mediator set $\mathbf{M}$; \textbf{ii)} identify admissible sets; and \textbf{iii)} Estimation.

\textbf{Detection of  $\mathbf{M}$}. We follow the FCI-JCI algorithm \cite{ZHANG20081873} to identify adjustment sets among social-economic factors. In total, $M$ contains four features: income and working styles (social economics status) Body mass index (BMI), and the outcome score at baseline (health status).

\textbf{Identification of admissible sets.} In addition to the race variable, there are also other context variables such as gender, age, \emph{etc.} We discovered that there is at least a context variable that is non-adjacent to $M$ for each adjusted factor $M$. As a result, we can directly identify $\mathbf{B}_{M}$ by iteratively implementing the FCI-JCI algorithm to obtain MAG over $(R,M,O,Y)$ for each $O \in \mathbf{B}^{\mathrm{all}}_M$. Specifically, the $\mathbf{B}_M$ is (age, gender, education years, working style) when $M$ denotes income; is (age, gender, education years, back pain at baseline $(bp_{BA})$) when $M$ denotes working style; (gender, diabetes) when $M$ denotes BMI and finally is $\emptyset$ when $M:=bp_{BA}$. 

\textbf{Estimation.} To fit $\mathbb{E}(Y|\mathbf{B}_M,M,R)$, we train a random forest predictor with $50$ estimators.

\subsection{Results on SPS, DS diseases when $R$ denotes "Race"}
\label{app:exp-race}
In this section, we show our estimated adjusted and unadjusted effects of SPS and DS diseases in the SPORT dataset. Fig.~\ref{fig:sps} and Fig.~\ref{fig:ds} respectively show the adjusted and unadjusted effects of the SPS and DS diseases. Fig.~\ref{fig:total} shows the total effects.

\begin{figure}[ht!]
    \centering
    \includegraphics[width=5.5in]{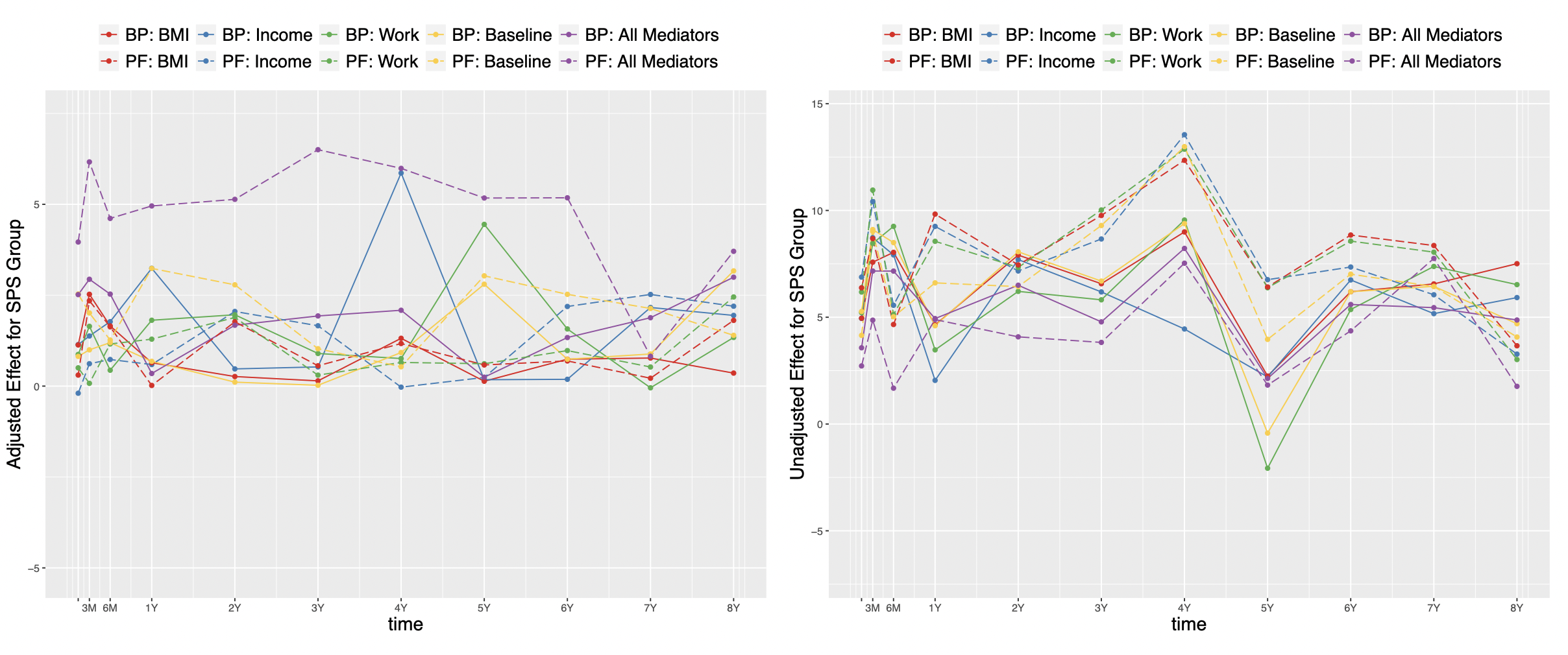}
    \caption{Adjusted and Unadjusted effects of the SPS disease when $R$ denotes race.}
    \label{fig:sps}
\end{figure}

\begin{figure}[ht!]
    \centering
    \includegraphics[width=5.5in]{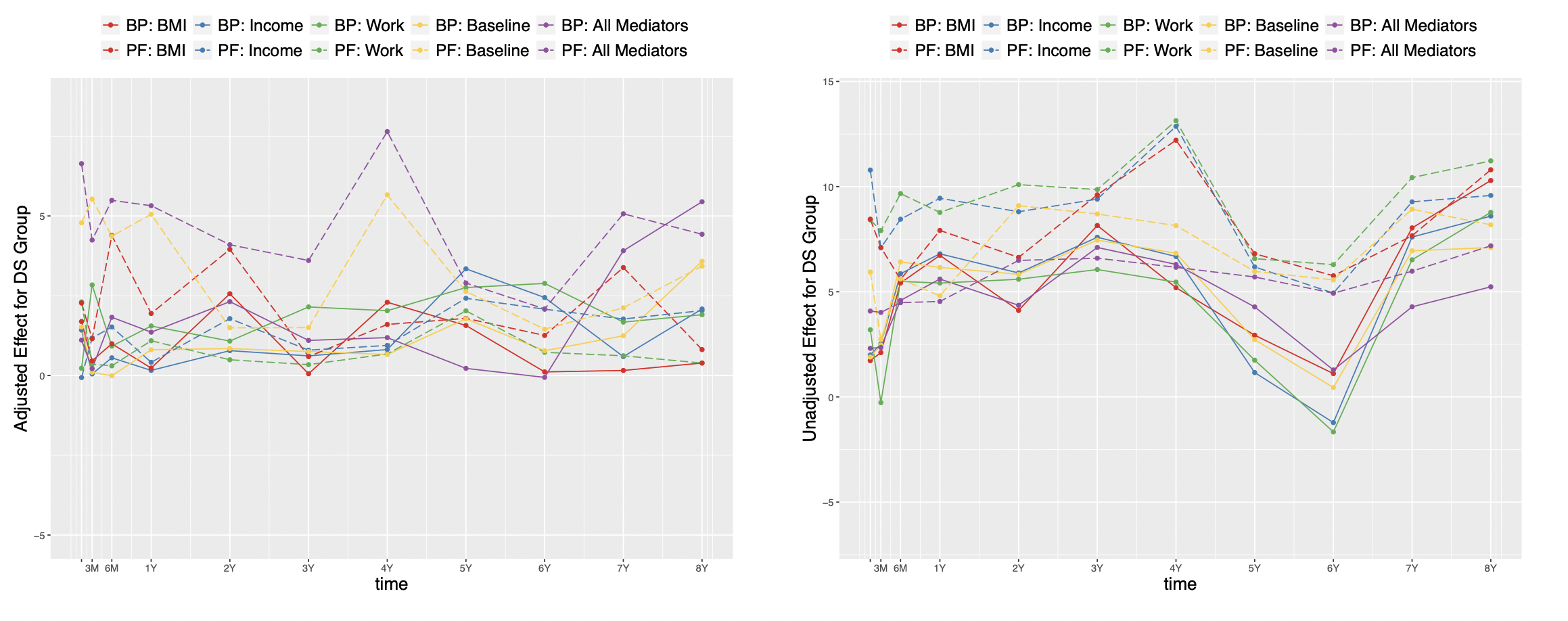}
    \caption{Adjusted and Unadjusted effects of the DS disease when $R$ denotes race.}
    \label{fig:ds}
\end{figure}

\begin{figure}[ht!]
    \centering
    \includegraphics[width=5.5in]{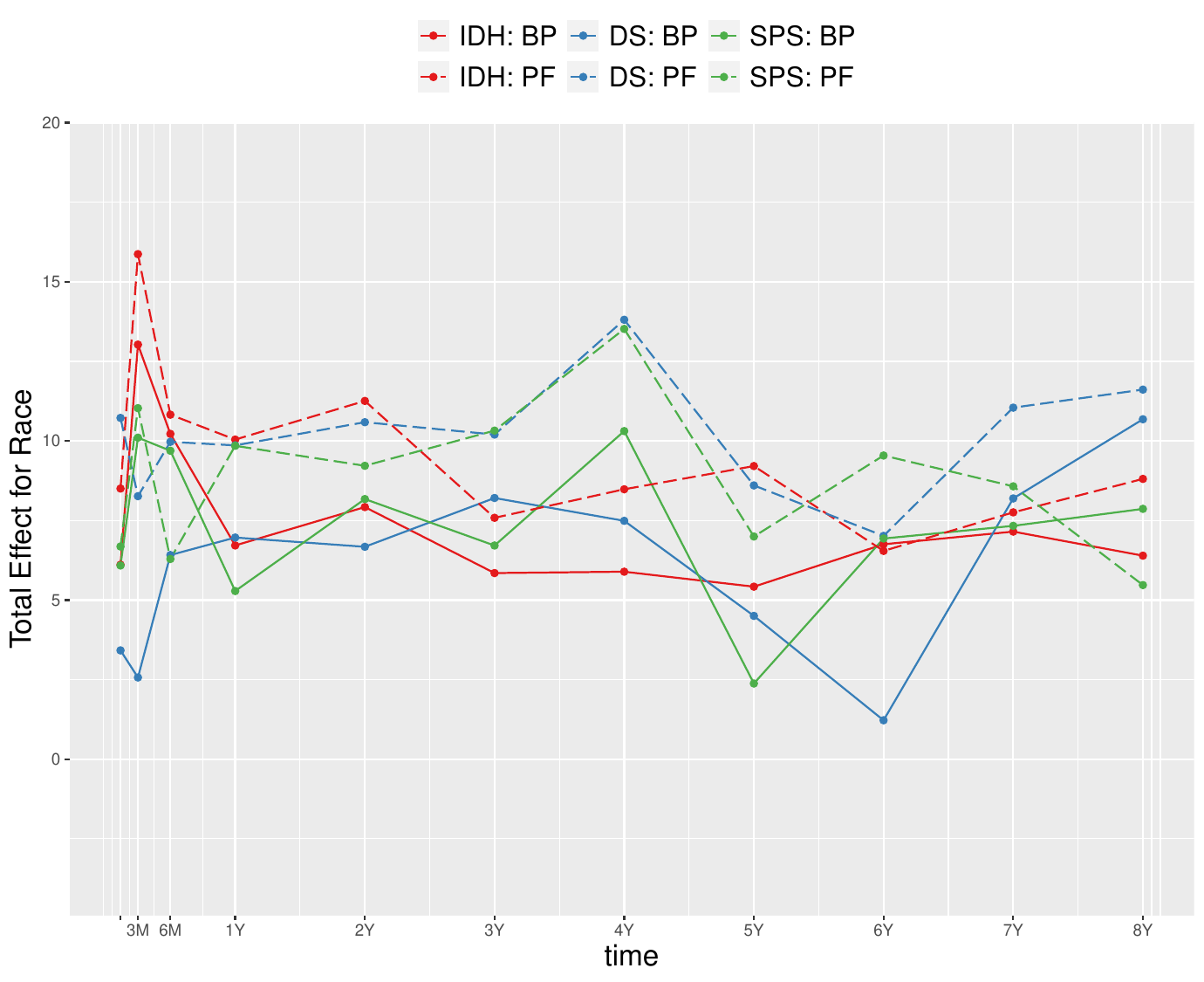}
    \caption{Total Effects with $Y$ denoting BF and PF scores when $R$ denotes race.}
    \label{fig:total}
\end{figure}

\newpage
\subsection{Results on when $R$ denotes "Gender"}
\label{app:exp-gender}
We also conduct experiments when $R$ denotes "Gender", with $R=1$ (or $R=0$) meaning the male (female) group. Fig.~\ref{fig:total} shows the total effects of $\mathbb{E}(Y|R=1)-\mathbb{E}(Y|R=0)$, which suggests a significant difference in SPS and DS diseases; and a minor but unignorable difference in IDH disease. The selected mutable variable set $\mathbf{M}$ contains "income", "working styles" and "BMI". The $\mathbf{B}_M$ is (age, gender, education years, working style) when $M$ denotes income; is (age, gender, education years, back pain at baseline $(bp_{BA})$) when $M$ denotes working style; (gender, diabetes) when $M$ denotes BMI. The whole implementation procedure is the same to the one when $R$ denotes "Race".  Fig.~\ref{fig:idh-g},~\ref{fig:sps-g} and Fig.~\ref{fig:ds-g} respectively shows the adjusted and unadjusted effects of the IDH, SPS, and DS diseases.

\begin{figure}[ht!]
    \centering
    \includegraphics[width=5.5in]{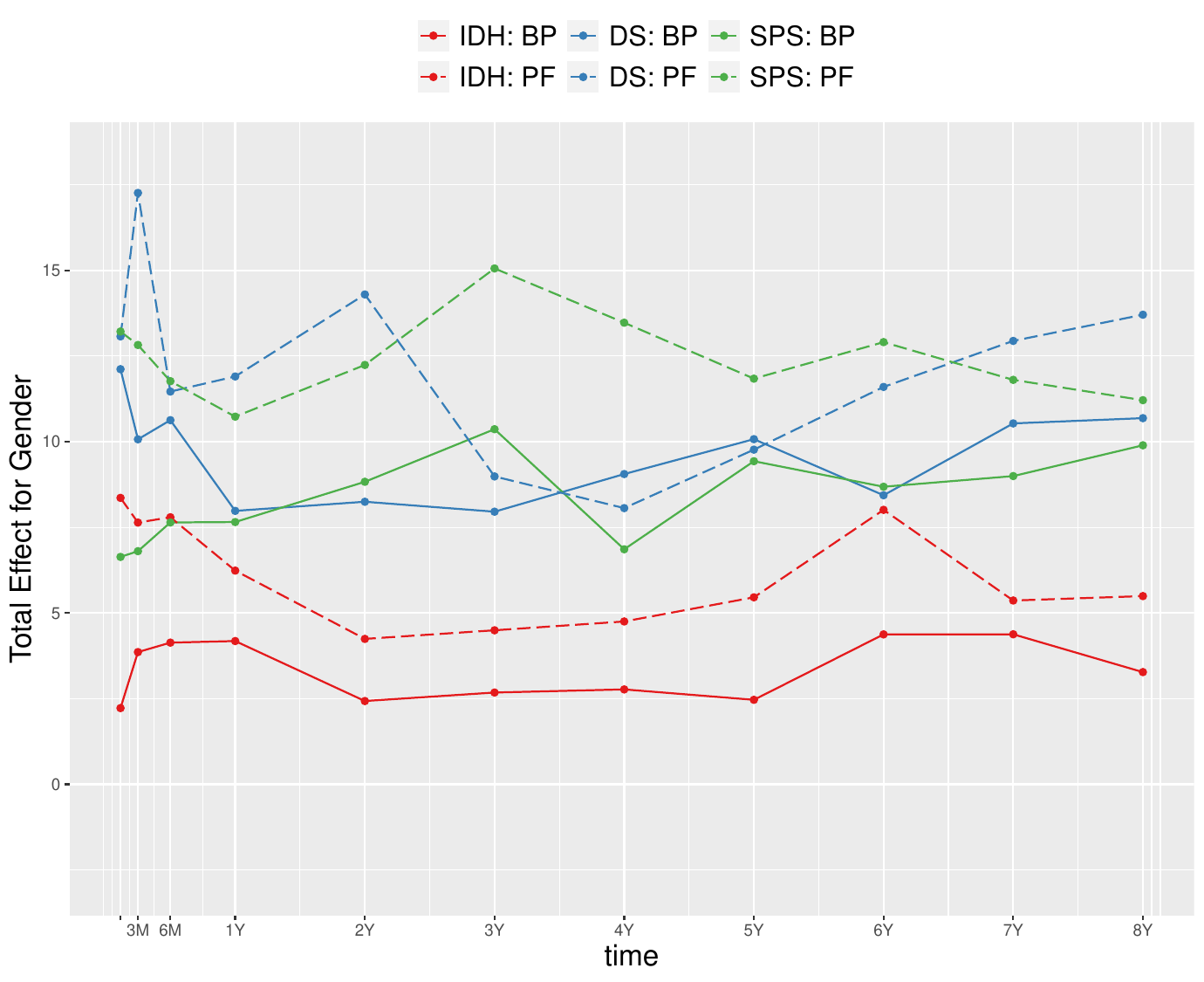}
    \caption{Total Effects with $R$ denotes gender.}
    \label{fig:total-g}
\end{figure}

\begin{figure}[ht!]
    \centering
    \includegraphics[width=5.5in]{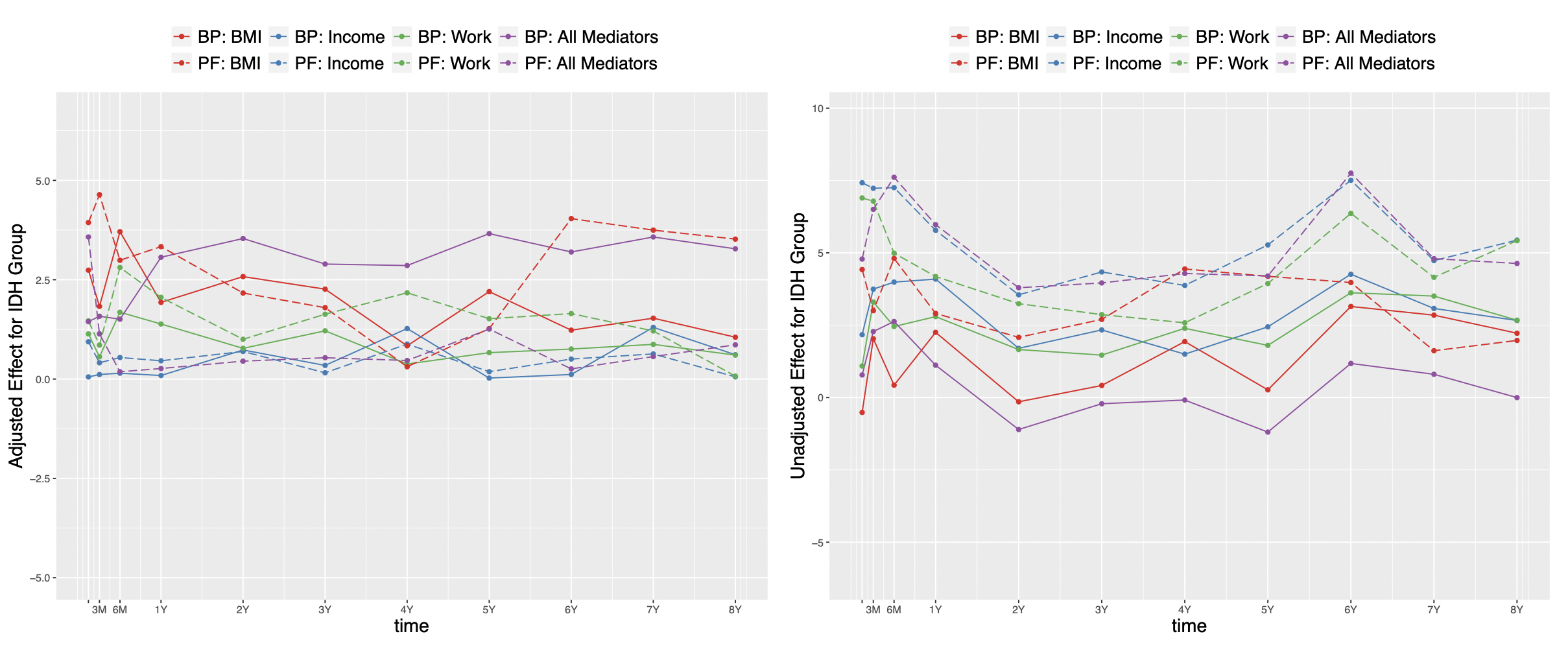}
    \caption{Adjusted and Unadjusted effects of the IDH diseases when $R$ denotes gender.}
    \label{fig:idh-g}
\end{figure}

\begin{figure}[ht!]
    \centering
    \includegraphics[width=5.5in]{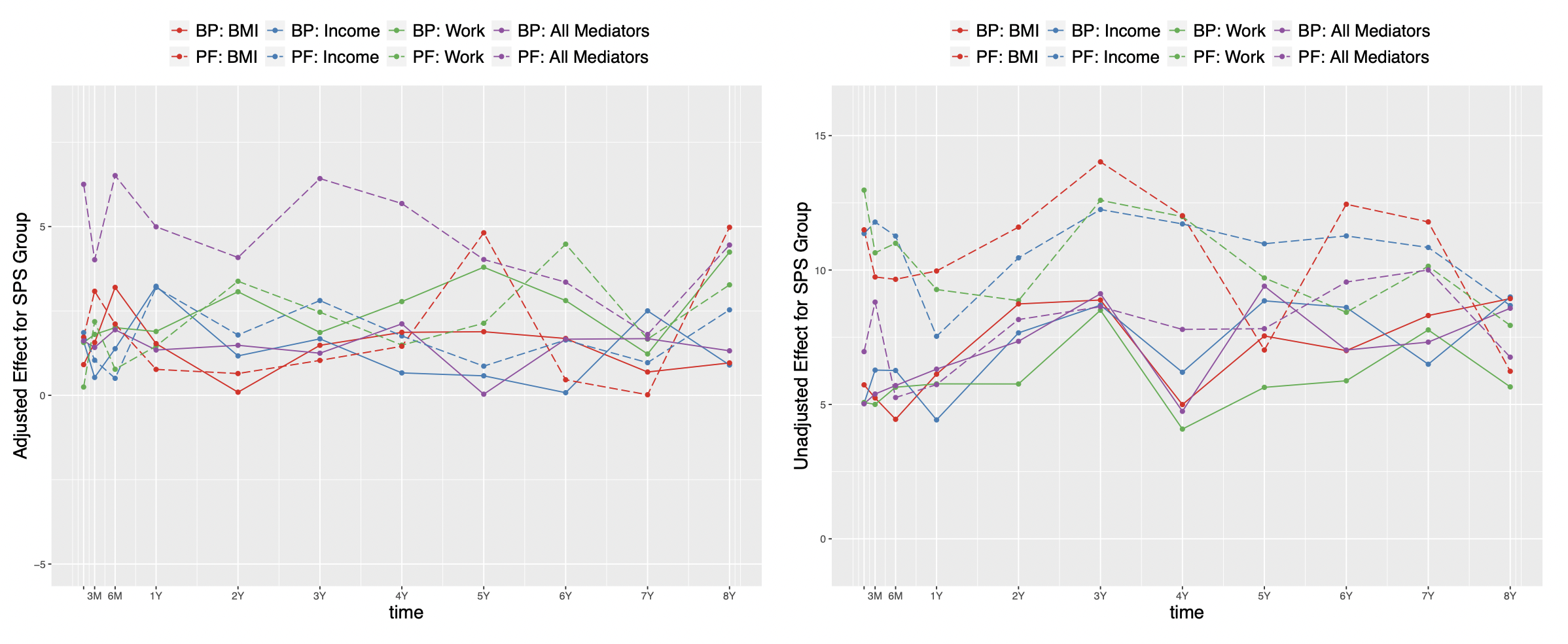}
    \caption{Adjusted and Unadjusted effects of the SPS disease when $R$ denotes gender.}
    \label{fig:sps-g}
\end{figure}

\begin{figure}[ht!]
    \centering
    \includegraphics[width=5.5in]{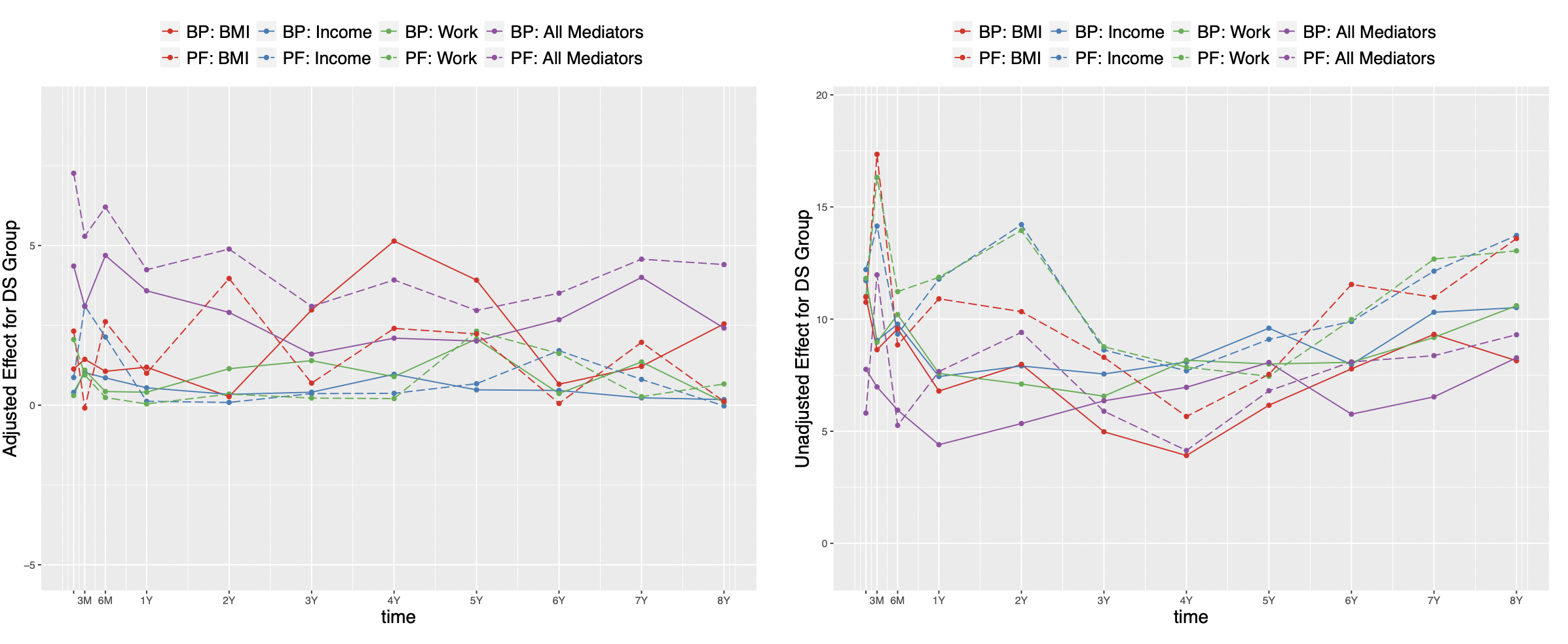}
    \caption{Adjusted and Unadjusted effects of the DS disease when $R$ denotes gender.}
    \label{fig:ds-g}
\end{figure}

\end{document}